\newlength{\drop}
\definecolor{amethyst}{rgb}{0.6, 0.4, 0.8}
\definecolor{burgundy}{rgb}{0.5, 0.0, 0.13}
\newtheoremstyle{remboldstyle}
  {}{}{}{}{\bfseries}{.}{.5em}{{\thmname{#1 }}{\thmnumber{#2}}{\thmnote{ (#3)}}}
\theoremstyle{remboldstyle}
\newtheorem{theorem}{Theorem}[section]
\title{\textbf{Closed-loop geothermal systems: Modeling and predictions}}
\author{\textbf{K.~Adhikari$^{1}$}, \textbf{M.~K.~Mudunuru$^{2}$}, and \textbf{K.~B.~Nakshatrala$^{1}$} \\
  {\small 
  $^{1}$Department of Civil and Environmental Engineering, 
  University of Houston, Houston, TX 77204, USA. \\
  $^{2}$Pacific Northwest National Laboratory, Richland, WA 99354, USA. 
  } \\
  \textbf{Correspondence to:} knakshatrala@uh.edu, 
  +1-713-743-4418
  }
\keywords{reduced-order modeling; geothermal energy; 
closed-loop systems; 
transient response; 
thermal efficiency; 
boundary conditions}
\begin{document}

\begin{titlepage}
  \drop=0.1\textheight
  \centering
  \vspace*{\baselineskip}
  \rule{\textwidth}{1.6pt}\vspace*{-\baselineskip}\vspace*{2pt}
  \rule{\textwidth}{0.4pt}\\[\baselineskip]
       {\Large \textbf{\color{burgundy}
       Closed-loop geothermal systems: Modeling and predictions}}\\[0.3\baselineskip]
       \rule{\textwidth}{0.4pt}\vspace*{-\baselineskip}\vspace{3.2pt}
       \rule{\textwidth}{1.6pt}\\[0.2\baselineskip]
       \scshape
       An e-print of this paper is available on arXiv. \par
       \vspace*{0.2\baselineskip}
       Authored by \\[0.2\baselineskip]

{\Large K.~Adhikari\par}
  {\itshape Graduate Student, Department of Civil \& Environmental Engineering \\
  University of Houston, Houston, Texas 77204.}\\[0.2\baselineskip]

  {\Large M.~K.~Mudunuru\par}
  {\itshape Staff Scientist, 
  Pacific Northwest National Laboratory, Richland, Washington 99354.}\\[0.2\baselineskip]
  
  {\Large K.~B.~Nakshatrala\par}
  {\itshape Department of Civil \& Environmental Engineering \\
  University of Houston, Houston, Texas 77204. \\
  \textbf{phone:} +1-713-743-4418, \textbf{e-mail:} knakshatrala@uh.edu \\
  \textbf{website:} http://www.cive.uh.edu/faculty/nakshatrala}\\[-0.25\baselineskip]
\begin{figure*}[ht]
    \centering
    \includegraphics[scale=0.43]{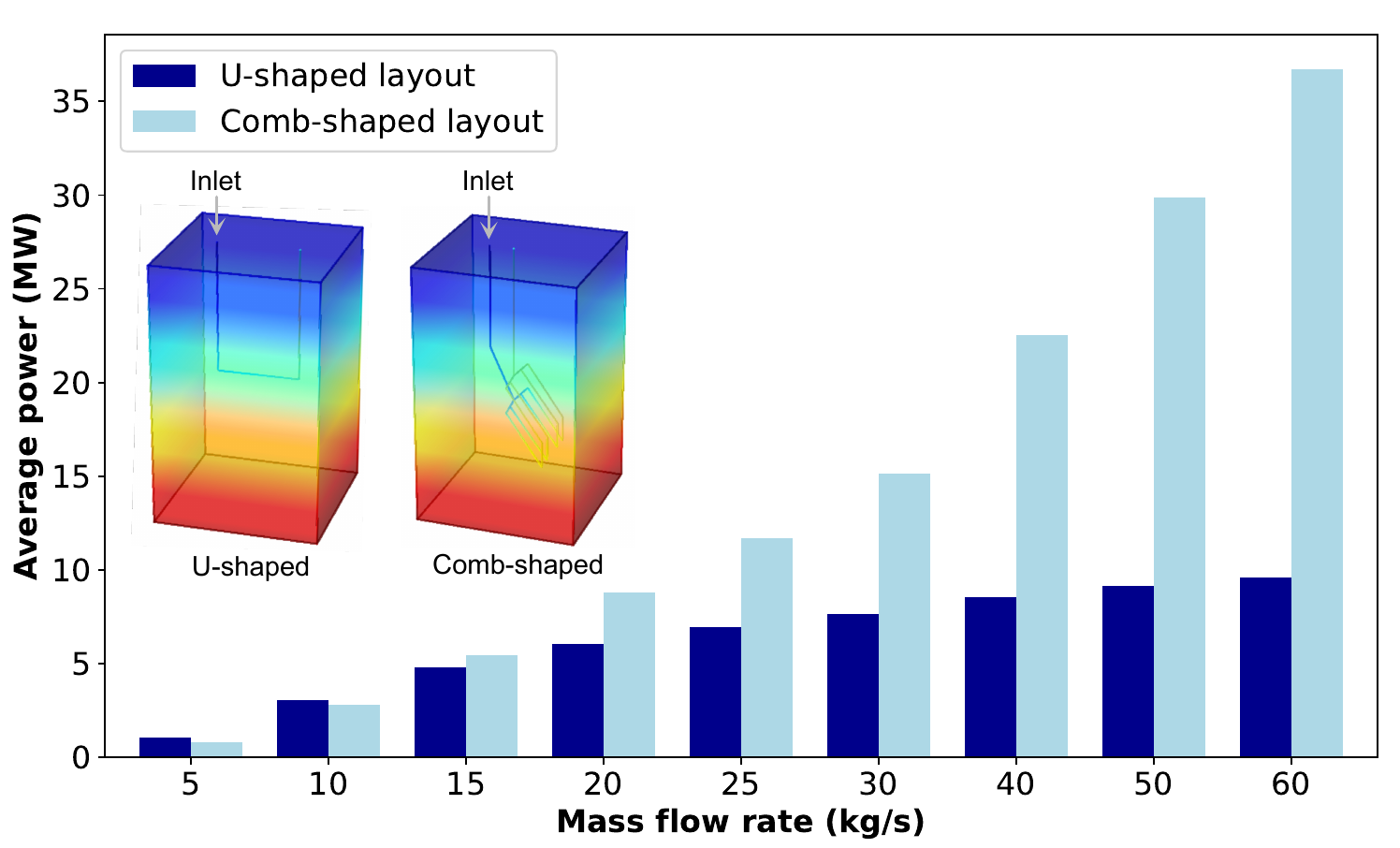}
    \vspace{-0.1in}   
    \captionsetup{labelformat=empty}
    \caption{The figure illustrates the average power production and temperature profile of geothermal systems featuring U-shaped and comb-shaped closed loop layouts. The geothermal gradient is approximately 30 K/km. Fluid enters the system at the inlet, travels through the channel, and exits the outlet at an elevated temperature. At lower mass flow rates, such as 5 kg/s and 10 kg/s, the U-shaped layout provides better thermal power production compared to comb-shaped layout. However, as the mass flow rate increases, the average power generated by the comb-shaped layout increases significantly. Therefore, the right combination of vascular layout and mass flow rate helps in achieving an efficient closed-loop geothermal system.}
\end{figure*}
\vspace{-0.1in}
  \vfill
  {\scshape 2024} \\
  {\small Computational \& Applied Mechanics Laboratory} \par
\end{titlepage}

\begin{abstract}
Geothermal energy is a sustainable baseload source recognized for its ability to provide clean energy on a large scale. \emph{Advanced Geothermal Systems} (AGS)---offer promising prototypes---employ a closed-loop vascular layout that runs deep beneath the Earth’s surface. A working fluid (e.g., water or supercritical carbon dioxide (sCO2)) circulates through the vasculature, entering the subsurface at the inlet and exiting at the outlet with an elevated temperature. For designing and performing cost-benefit analysis before deploying large-scale projects and maintaining efficiency while enabling real-time monitoring during the operational phase, modeling offers cost-effective solutions; often, it is the only available option for performance assessment. A knowledge gap exists due to the lack of a fast predictive modeling framework that considers the vascular intricacies, particularly the jumps in the solution fields across the channel. Noting that the channel diameter is considerably smaller in scale compared to the surrounding geological domain, we develop a reduced-order modeling (ROM) framework for closed-loop geothermal systems. This ROM incorporates the jump conditions and provides a quick and accurate prediction of the temperature field, including the outlet temperature, which directly correlates with the power production capacity and thermal draw-down. We demonstrate the predictive capabilities of the framework by establishing the uniqueness of the solutions and reporting representative numerical solutions. The modeling framework and the predictions reported in this paper benefit the closed-loop geothermal community, enabling them to determine the system’s performance and optimal capacity.
\end{abstract}

\maketitle

\vspace{-0.2in}

\setcounter{figure}{0}   
\setcounter{section}{0} 

\section*{A LIST OF ABBREVIATIONS}
\begin{longtable}{ll}\hline
    \hline \multicolumn{2}{c}{\emph{Abbreviations}} \\ \hline
    \textsf{AGS} & advanced geothermal system \\
    \textsf{BC} & boundary condition \\
    \textsf{EGS} & enhanced geothermal system \\
    \textsf{IBVP} & initial boundary value problem \\
    \textsf{ROM} & reduced-order model \\
    \hline
\end{longtable}

\setcounter{table}{0}   

\clearpage 
\section{INTRODUCTION AND MOTIVATION}
\label{Sec:S1_GROM_Intro}

\lettrine[findent=2pt]{\fbox{\textbf{C}}}{onventional energy sources}---originating from fossil fuels such as coal, oil, and natural gas---come with various drawbacks. \emph{First}, the dwindling availability of these natural resources poses a threat to our long-term energy sustainability. \emph{Second}, scientists recognize that energy production reliant on fossil fuels significantly contributes to the emission of greenhouse gases, widely acknowledged as a critical factor in driving climate change \citep{owidco2andgreenhousegasemissions,friedlingstein2022global}. \emph{Third}, extracting, processing, and using these fossil-based energy sources pollute the environment \citep{evans2009assessment}. Therefore, delving into alternative energy technologies and progressively transitioning towards a sustainable, low-carbon energy future is imperative.  

Clean energy enables this transition. Often called renewable or green energy, it encompasses energy sources and technologies with minimal adverse environmental effects compared to traditional fossil fuels. The principal aim of clean energy is to decrease carbon emissions into the atmosphere. Some popular clean energy forms include solar, wind, hydrogen, and geothermal energy \citep{jaiswal2022renewable}. Among these clean energy alternatives, geothermal energy has emerged as a reliable renewable resource, capable of providing base load power during both peak (day) and off-peak (night) times \citep{falcone2018assessment}.

Several factors make geothermal energy truly reliable and sustainable:
\begin{enumerate}
\item The Earth possesses extensive reserves of thermal energy: its core produces a power of approximately 40 $\mathrm{TW}$ \citep{johnston2011emerging,stober2021geothermal}. According to the World Energy Council (WEC), the Earth's crust alone holds a total energy of $5.4 \times 10^{27} \, \mathrm{J}$  \citep{kumari2019sustainable}. A mere $0.1\%$ of this energy could meet the planet's energy demands for 2000 years, considering the current global energy demand of $2.7 \times 10^{21} \, \mathrm{J}$  \citep{bertani2012geothermal}. 
\item Geothermal energy excels in environmental friendliness, leading with a score of 7.23/10 for minimal air pollution among clean energy options \citep{dincer2015review}. 
\item Advancements in drilling technology, primarily from the hydraulic fracturing boom, enable the boring of vertical and horizontal channels spanning miles \citep{bertani2018first}. This flexibility in drilling makes geothermal energy competitive, as the amount of available thermal energy increases with depth. 
\item With the technical know-how gained from various case studies \citep{mohamed2021significance}, the ability to predict drilling fluids' rheology \citep{nasiri2010novel}, and the availability of anti-sagging agents combined with novel drilling techniques \citep{elkatatny2019mitigation}, achieving intricate channel layouts is possible and economical. 
\end{enumerate}
Because of these favorable characteristics and technological developments, as of 2016, twenty-six countries across the globe use geothermal energy to generate electricity \citep{bertani2016geothermal}.

Currently, there are two prominent geothermal prototypes that are scalable: \emph{Enhanced Geothermal Systems} (EGS) and \emph{Advanced Geothermal Systems} (AGS). A typical EGS comprises two parallel channels, an injection well and a production well, extending deep into the Earth's subsurface, spanning orders of kilometers \citep{olasolo2016enhanced,ricks2024role}. The process involves injecting water into the subsurface through the inlet channel. This water then navigates through fractures in hot rock formations, ultimately exiting via the outlet channel. Along this flight, the water absorbs heat and exits the subsurface at an elevated temperature \citep{norbeck2018field}. The resulting hot water or steam can then be harnessed for either household heating or electricity generation. Although the injected water could enhance permeability by reopening pre-existing fractures, it can also risk triggering seismic activity \citep{doeEGS,kaieda2010comparison,breede2013systematic}.

Consequently, AGS, a closed-loop geothermal system, effectively harnesses energy from deeper hot rock layers and low-permeable regions without induced seismicity \citep{van2020new,amaya2020greenfire}. AGS operates by circulating a working fluid from the surface through a network of channels without resorting to reservoir stimulation, thereby reducing seismic risks \citep{malek2022techno,yasukawa2021geothermal}. AGS involves pure conduction with limited contact with the rocks, resulting in lower thermal conductivity. This characteristic leads to a gradual replenishment of thermal energy, ultimately ensuring sustained high thermal performance \citep{wang2021heat,cheng2013studies}. A full-scale demonstration project called Eavor-Lite in Alberta, Canada has been operating since 2019 \citep{toews2021eavor}, and the first commercial closed-loop project called Eavor-loop$^{\mathrm{TM}}$ is recently initiated in Germany \citep{longfield2022eavor}. An exponential growth in various studies and numerical investigation on AGS is seen from 2001 through 2021 \citep{white2024numerical}. Recognizing the advantages and recent advancements \citep{white2024numerical}, our study focuses on modeling closed loop systems.

Given the intricate \emph{subsurface} processes within a geothermal energy system, the utilization of modeling tools is indispensable. Various concomitant strategies encompass economic modeling (including cost-benefit analysis, risk assessment, and regulatory changes), sustainability analysis, reservoir modeling (e.g., focusing on geological aspects like faults), integration with power systems, technology developments (e.g., drilling advancements), and heat extraction modeling (e.g., encompassing well-bore modeling and thermodynamics models) \citep{malek2022techno,beckers2022techno,esmaeilpour2021performance,bottcher2016geoenergy,watanabe2017geoenergy}. This paper explicitly addresses the last aspect of modeling.  

The complexity of a model depends on the specific processes and physical scales one seeks to capture. In the case of heat extraction modeling of a geothermal system, if one wants to resolve the fluid transport within the vasculature, one must use the so-called conjugate heat transfer, which couples the flow dynamics of the fluid (i.e., coolant) and heat transfer in the host solid and fluid \citep{zhang2013conjugate}. However, when the goal is to forecast temperatures in the subsurface and at the outlet, employing a reduced-order model will suffice. The choice is justified, as the diameter of the bore-hole is $0.3 \, \mathrm{m}$ while the depth that the vasculature extends is of order tens of kilometers. Consequently, it is reasonable to assume the vasculature behaves as a curve within the domain. This assumption enables the development of a simplified mathematical model that accurately predicts the outlet temperature and temperature field in the subsurface. Therefore, we present a reduced-order modeling strategy for closed-loop thermal regulation systems that offers fast forecasting of temperature fields, especially the outlet temperature.

To advance modeling efforts and attain a deeper understanding of the behavior of a closed loop geothermal system, answers to several key scientific questions are warranted. The first two pertain to the core aspects of modeling:
\begin{enumerate}[(Q1)]
    \item What should be the appropriate size of the computational domain for a given vascular layout?
\end{enumerate}
\begin{enumerate}[(Q2)]
    \item What are the suitable boundary conditions, and how does this choice affect the prediction of the system's performance?
\end{enumerate}
Sustained performance is essential for the successful operation of an  AGS. Since the outlet temperature relates directly to the thermal power production, the third scientific question reads: 
\begin{enumerate}[(Q3)]
    \item How does the outlet temperature develop over time? Specifically, how does the flow rate influence the initial ascent, the time to reach the peak, and the long-term behavior?   
\end{enumerate}
On the practical front, predicting the top surface temperature will be valuable to stakeholders and policymakers. Ergo, the fourth and final question is:
\begin{enumerate}[(Q4)]
    \item How does the temperature field vary on the top surface change over time? 
\end{enumerate}
This paper adeptly answers these questions, thereby offering a modeling capability as well as conspicuous insights into the behavior of a geothermal system. 

Here is the plan for the rest of this paper. We outline the features of a typical closed-loop geothermal system, provide the problem setup, and discuss the leading vascular layouts (\S\ref{Sec:S2_GROM_Setup}). Next, we present the governing equations of the proposed reduced-order model (ROM) alongside the weak finite element formulation (\S\ref{Sec:S3_GROM_GE}). Following that, we offer a mathematical analysis showing that the solutions under the proposed model are unique (\S\ref{Sec:S4_GROM_Properties}). Then, using the proposed ROM and finite element numerical framework, we present numerical solutions that encompass a convergence study and impact of boundary conditions (\S\ref{Sec:S5_GROM_NR_Domain_BCs}). Subsequently, we analyze the efficiency of geothermal systems from two perspectives: outlet temperature and power production (\S\ref{Sec:S6_GROM_Efficiency}). Further, using the numerical solutions, we study the mean surface temperature of the area near a geothermal system (\S\ref{Sec:S7_GROM_Surface_MST}). Finally, we provide the discussion and conclude the paper by highlighting the main findings and contributions and suggesting possible research extensions (\S\ref{Sec:S8_GROM_Discussion_Closure}). 

\begin{figure}[ht]
    \centering
    \includegraphics[scale=0.11]{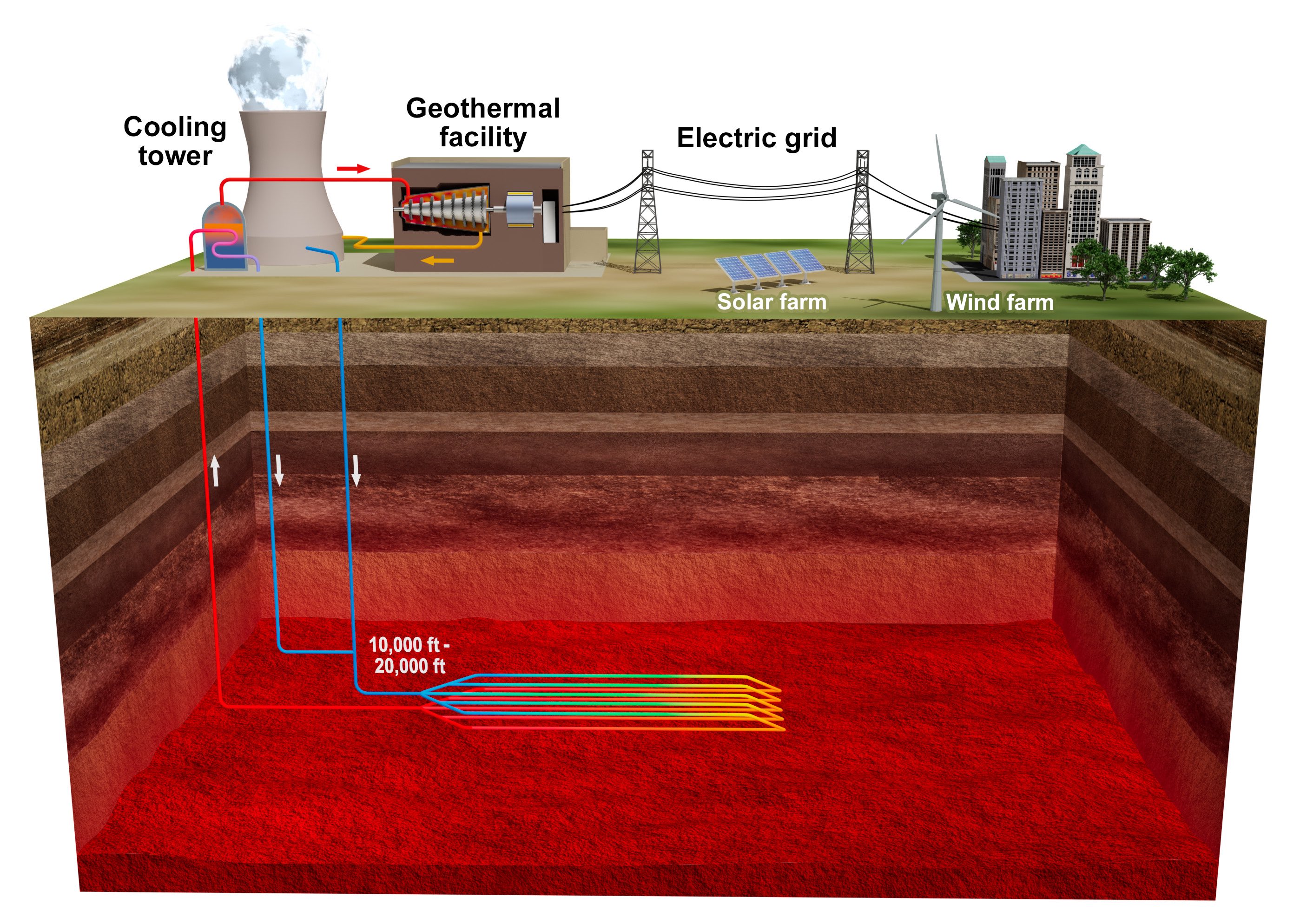}
    \caption{\textsf{A geothermal system:} The figure illustrates a standard closed-loop geothermal system. A coolant, typically water, circulates through an underground network of channels at depths ranging from 10,000 to 20,000 feet, extracting hot water. This heated water undergoes a heat exchange process, transforming into steam. Utilizing this steam, the geothermal facility generates electricity by propelling a turbine. An electric grid facilitated by a network of transmission towers distributes the produced electricity to urban centers. The process releases warm water, which undergoes additional cooling in a dedicated cooling tower, becoming available for other energy consumption (e.g., heating and cooling, agriculture, etc). Additionally, the figure highlights a collaborative approach involving a solar and wind farm to address the region's energy demand. \label{Fig1:GROM_Concept_figure}}
\end{figure}

\section{VASCULAR LAYOUTS AND PROBLEM SETUP}
\label{Sec:S2_GROM_Setup}

As shown in \textbf{Fig.~ \ref{Fig1:GROM_Concept_figure}}, a geothermal system comprises a gamut of operating machinery (e.g., circulation pump, cooling tower, heat exchanger, electric grid, to name a few) stationed on the surface and a vascular layout beneath the surface, with the inlet and outlet on the horizon. Flowing water (a candidate coolant in geothermal applications) drives the whole heat extraction: A circulation pump pushes water to enter the subsurface at the inlet. The injected water flows through the intermediate channel layout and extracts heat along its flow path from the nearby subsurface. Finally, the heated water exits the subsurface from the outlet.

This paper examines two popular vascular designs, depicted in \textbf{Fig.~\ref{Fig2:GROM_Vascular_layouts}}. The \emph{U-shaped} vascular layout represents a basic channel setup devoid of branching, yet it offers robust performance 
 \citep{sun2018geothermal}. This vasculature ($\Sigma$) comprises two vertical channels linked by a single horizontal channel, forming a U-shaped structure, as shown in Fig.~\ref{Fig2:GROM_Vascular_layouts}A. The inlet is at the start of one of the vertical branches, while the outlet is at the end of the other. The figure also shows the unit tangent vector along the vasculature, denoted by $\widehat{\mathbf{t}}(\mathbf{x})$, and the unit outward normal vectors on either side of a vesicle, represented by  $\widehat{\mathbf{n}}^{\pm}(\mathbf{x})$. 

The \emph{comb-shaped} vascular layout, also popular as Eavor-loop 2.0, is currently considered to be the state-of-the-art geothermal configuration \citep{beckers2022techno}. This layout is increasingly favored because its branching structure covers a larger surface area, significantly enhancing the efficiency of heating the coolant. The concomitant vasculature comprises two vertical legs interconnected by multiple branches, as depicted in Fig.~\ref{Fig2:GROM_Vascular_layouts}B.

\begin{figure}[h]
    \centering
    \includegraphics[scale=0.7]{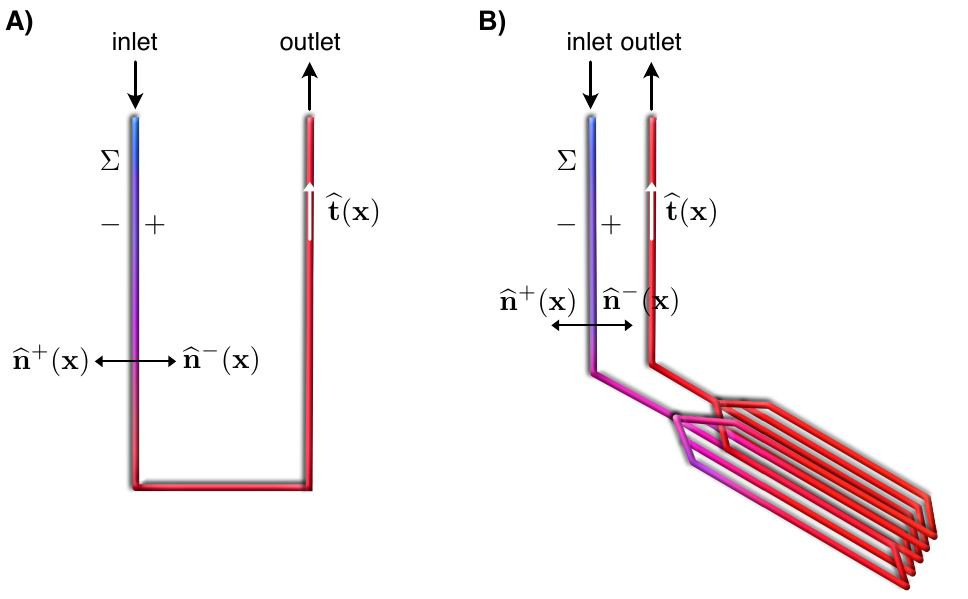}
    \caption{\textsf{Vascular layouts:} This graphic depicts two commonly utilized vascular configurations, both in practical applications and within the context of this paper: \textbf{A)} a U-shaped layout devoid of branching, and \textbf{B)} a comb-shaped arrangement featuring branches. $\Sigma$ denotes the vascular layout, $\widehat{\mathbf{t}}(\mathbf{x})$ the unit tangent vector pointing toward increasing arc-length along the vasculature, and $\widehat{\mathbf{n}}^{\pm}(\mathbf{x})$ the unit outward normal vectors on either side of the vasculature. The fluid (coolant) enters the domain at the inlet and exits at the outlet. As the fluid travels through the length of the vasculature, it extracts heat from the surrounding subsurface, leading to an increase in its thermal energy---either raising the temperature or latent heat.}
    \label{Fig2:GROM_Vascular_layouts}
\end{figure}

In the rest of this paper, we study the performance of these two popular channel designs in extracting thermal energy from the subsurface using modeling.

\section{A GEOTHERMAL MATHEMATICAL MODEL}
\label{Sec:S3_GROM_GE}

We present a mathematical model that predicts the temperature in the subsurface, along the vasculature, and at the outlet. The proposed model is an extension of our previous efforts on thermal regulation in microvascular composites \citep{nakshatrala2023ROM,2023_Nakshatrala_PNAS_Nexus}. Some modeling elements are similar; nevertheless, there are notable differences. 

Our prior studies on microvascular composites considered slender bodies (i.e., thickness is very small compared to other characteristic dimensions); therefore, we pursued a two-dimensional reduced-order model in that application. Such a slenderness assumption is not applicable for geothermal systems (e.g., see the comb-shaped vascular layout given in Figs.~\ref{Fig1:GROM_Concept_figure} and \ref{Fig2:GROM_Vascular_layouts})---geothermal systems span kilometers in all directions. Accordingly, we develop a three-dimensional model for geothermal applications. Nonetheless, noting that the channel cross section is very small compared to the length of the vascular layout and the dimensions of the domain, we still treat the vasculature as a curve, but now as a space curve (i.e., a curve in three dimensions). Also, the boundary conditions are different. In the case of microvascular composites application, the boundaries are adiabatic. In the geothermal application, the lateral and bottom boundaries are prescribed with a temperature field, increasing with depth, while the top surface is free to convect and radiate. 

Consider a domain $\Omega \subset \mathbb{R}^{3}$, bounded by a piece-wise smooth boundary $\partial\Omega$. Mathematically, 
\begin{align}
    \partial \Omega := \overline{\Omega} \setminus \Omega 
\end{align}
in which an overline denotes the set closure. Let $\mathbf{x} \in \overline{\Omega}$ depict a spatial point, and $t \in [0,\mathcal{T}]$ the time, with $\mathcal{T}$ denoting the length of the time interval of interest. $\partial(\cdot)/\partial t$ represents the partial derivative in time, and $\mathrm{div}[\cdot]$ and $\mathrm{grad}[\cdot]$ the spatial divergence and gradient operators, respectively. $\vartheta(\mathbf{x},t)$, a time-dependent scalar field, represents temperature at a spatial point $\mathbf{x}$ and time $t$. Likewise, $\mathbf{q}(\mathbf{x},t)$ is a time-dependent vector field representing the heat flux vector.

The external boundary is divided into three complementary parts: $\Gamma^{\vartheta}$, $\Gamma^{q}$, and $\Gamma^{\mathrm{N}}$. 
That is, 
\begin{align}
    \Gamma^{\vartheta} \cup \Gamma^{q} \cup 
    \Gamma^{\mathrm{N}} = \partial \Omega 
\end{align}
and 
\begin{align}
    \Gamma^{\vartheta} \cap \Gamma^{q} = \emptyset, \;  
    \Gamma^{q} \cap \Gamma^{\mathrm{N}} = \emptyset, \;  
    \mathrm{and} \; 
    \Gamma^{\mathrm{N}} \cap \Gamma^{\vartheta} = \emptyset 
\end{align}
$\Gamma^{\vartheta}$ is the part of the boundary on which temperature (i.e., Dirichlet boundary condition) is prescribed, while $\Gamma^{q}$ is that part of the boundary on which the normal component of the heat flux (i.e., Neumann boundary condition) is prescribed. $\Gamma^{\mathrm{N}}$ is the portion of the boundary that is free to convect and radiate. We use the Stefan-Boltzmann model for radiation, which makes the boundary condition nonlinear (cf. Eq.~\eqref{Eqn:GROM_GE_nonlinear_BC}).

The domain comprises a vasculature $\Sigma$---a connected wellbore within the subsurface with the inlet and outlet on the surface. $s$ denotes the arc-length along the vasculature, starting from the inlet and ending at the outlet. $\widehat{\mathbf{t}}(\mathbf{x})$ represents the unit tangent vector along the vasculature; see \citep[SI]{2023_Nakshatrala_PNAS_Nexus} for a mathematical definition of the tangent vector. We avail the jump and average operators to mathematically describe the continuity of the solution field and the balance of energy across the vasculature. Given a scalar field $\alpha(\mathbf{x},t)$ and a vector field $\mathbf{a}(\mathbf{x},t)$, the jump operator is defined as follows:
\begin{subequations}
\label{Eqn:GROM_Jump_operator}
\begin{align}
    \llbracket \alpha(\mathbf{x},t) \rrbracket 
    &= \alpha^{+}(\mathbf{x},t)
    \, \widehat{\mathbf{n}}^{+}(\mathbf{x})
    + \alpha^{-}(\mathbf{x},t) 
     \, \widehat{\mathbf{n}}^{-}(\mathbf{x}) \\
    \llbracket\mathbf{a}(\mathbf{x},t) \rrbracket 
    &= \mathbf{a}^{+}(\mathbf{x},t)
    \bullet \widehat{\mathbf{n}}^{+}(\mathbf{x})
    + \mathbf{a}^{-}(\mathbf{x},t) 
     \bullet \widehat{\mathbf{n}}^{-}(\mathbf{x})
\end{align}
\end{subequations}
where $\pm$ denotes the limits of the fields from either side of $\Sigma$. For these fields, the average operator is defined as follows:
\begin{subequations}
\label{Eqn:GROM_Average_operator}
\begin{align}
    \{\!\!\{ \alpha(\mathbf{x},t) \rrbracket 
    &= \frac{1}{2} \Big(\alpha^{+}(\mathbf{x},t)
    + \alpha^{-}(\mathbf{x},t)\Big) \\
    \{\!\!\{\mathbf{a}(\mathbf{x},t) \}\!\!\}
    &= \frac{1}{2} \Big( \mathbf{a}^{+}(\mathbf{x},t)
    + \mathbf{a}^{-}(\mathbf{x},t) \Big) 
\end{align}
\end{subequations}
The following property satisfied by the jump and average operators will be handy in establishing the uniqueness of solutions:
\begin{align}
    \label{jump_operator_product}
    \big[\!\!\big[\alpha (\mathbf{x},t) \,\mathbf{a}(\mathbf{x},t) 
    \big]\!\!\big] =
    \{\!\!\{\alpha (\mathbf{x},t)\}\!\!\} 
    \big[\!\!\big[\mathbf{a}(\mathbf{x},t) 
    \big]\!\!\big]
    + \big[\!\!\big[\alpha (\mathbf{x},t) 
    \big]\!\!\big] \bullet \{\!\!\{\mathbf{a}(\mathbf{x},t)\}\!\!\} 
\end{align}
For more details on jump and average operators, refer to \citep{nakshatrala2023thermal}.

The governing equations take the following form:
\begin{subequations}
\begin{alignat}{2}
    \label{Eqn:GROM_GE_BoE}
    &\rho_{s} \, c_{s} \frac{\partial \vartheta(\mathbf{x},t)}{\partial t} + \mathrm{div}\big[\mathbf{q}(\mathbf{x},t)\big] = 0 
    && \quad \mathrm{in} \; \Omega \\
    \label{Eqn:GROM_GE_fourier}
    &\mathbf{q}(\mathbf{x},t) 
    = - \mathbf{K}(\mathbf{x}) \, \mathrm{grad}\big[\vartheta(\mathbf{x},t)\big] 
    && \quad \mathrm{in} \; \Omega \\
    \label{Eqn:GROM_GE_temp_BC}
    &\vartheta(\mathbf{x},t) = \vartheta^{\mathrm{p}}(\mathbf{x},t) 
    && \quad \mathrm{on} \; \Gamma^{\vartheta} \\
    \label{Eqn:GROM_GE_heat_flux_BC}
    &\mathbf{q}(\mathbf{x},t) \bullet \widehat{\mathbf{n}}(\mathbf{x}) 
    = q^{\mathrm{p}}(\mathbf{x},t) 
    && \quad \mathrm{on} \; \Gamma^{q}  \\
    \label{Eqn:GROM_GE_nonlinear_BC}
    &\mathbf{q}(\mathbf{x},t) \bullet \widehat{\mathbf{n}}(\mathbf{x}) 
    = h_T \, \big(\vartheta(\mathbf{x},t) - \vartheta_{\mathrm{amb}} \big) + \varepsilon \, \sigma \,
    \big(\vartheta^{4}(\mathbf{x},t) - \vartheta^{4}_{\mathrm{amb}} \big)
    && \quad \mathrm{on} \; \Gamma^{\mathrm{N}}  \\
    \label{Eqn:GROM_GE_temp_jump_condition}
    &\llbracket\vartheta(\mathbf{x},t)\rrbracket = \mathbf{0}
    && \quad \mathrm{on} \; \Sigma  \\
    \label{Eqn:GROM_GE_q_jump_condition}
    &\llbracket\mathbf{q}(\mathbf{x},t)\rrbracket = \dot{m} \, c_f \,  \mathrm{grad}[\vartheta(\mathbf{x},t)] \bullet \widehat{\mathbf{t}}(\mathbf{x}) 
    && \quad \mathrm{on} \; \Sigma  \\
    \label{Eqn:GROM_GE_IC}
    &\vartheta(\mathbf{x},0) = \vartheta_{\mathrm{initial}}(\mathbf{x})
    &&\quad \mathrm{in} \; \Omega \\
    \label{Eqn:GROM_GE_Inlet}
    &\vartheta(\mathbf{x},t) \big\vert_{s = 0} = \vartheta_{\mathrm{inlet}}
    &&\quad \mbox{at inlet} 
\end{alignat}
\end{subequations}
where $\rho_{s}$ and $c_{s}$ are the density and specific heat capacity of the host solid, respectively. $\mathbf{K}(\mathbf{x})$ is the conductivity of the host solid. $\widehat{\mathbf{n}}(\mathbf{x})$ is the unit outward normal vector. $h_{T}$ denotes the heat transfer coefficient, $\varepsilon$ the emissivity, and $\sigma \approx 5.67 \times 10^{-8} \; \mathrm{W}  \mathrm{m}^{-2} \mathrm{K}^{-4}$ the Stefan-Boltzmann constant. $\vartheta_{\mathrm{amb}}$ is the ambient temperature. 

The fluid's mass flow rate is denoted by 
\begin{align}
\dot{m} = \rho_f \, Q
\end{align}
where $Q$ is the volumetric flow rate and $\rho_{f}$ the fluid's density. The heat capacity rate is defined as 
\begin{align}
\chi = \dot{m} \, c_f 
\end{align}
in which $c_f$ represents the specific heat capacity of the fluid. 

Equation \eqref{Eqn:GROM_GE_BoE} represents the energy balance, while Eq.~\eqref{Eqn:GROM_GE_fourier} is the Fourier law of heat conduction. Equations \eqref{Eqn:GROM_GE_temp_BC} and \eqref{Eqn:GROM_GE_heat_flux_BC} depict the Dirichlet and Neumann boundary conditions, respectively, and Eq. \eqref{Eqn:GROM_GE_nonlinear_BC} is the non-linear boundary condition accounting for the convection (based on Newton's law of cooling) and radiation (based on the Stefan-Boltzmann law). Equations \eqref{Eqn:GROM_GE_temp_jump_condition} and \eqref{Eqn:GROM_GE_q_jump_condition} are the jump conditions, describing the temperature continuity and the balance of heat flux across the vasculature, respectively. Equation \eqref{Eqn:GROM_GE_IC} denotes the initial condition with $\vartheta_{\mathrm{initial}}$ representing the prescribed initial temperature. Equation \eqref{Eqn:GROM_GE_Inlet} is the prescription of the temperature at the inlet, which is equal to the ambient temperature (i.e., $\vartheta_{\mathrm{inlet}} = \vartheta_{\mathrm{amb}}$). 

The resulting governing equations form a nonlinear second-order parabolic initial boundary value problem. The nonlinearity arises due to the the radiation term present in the boundary condition \eqref{Eqn:GROM_GE_nonlinear_BC}

\subsection{Coefficient of performance}
To quantify the performance of a geothermal system, we consider the following coefficient of performance: 
\begin{align}
\label{Eqn:GROM_coeff_of_performance}
\zeta(t) = 1 - \frac{\vartheta_{\mathrm{inlet}}}{\vartheta_{\mathrm{outlet}}(t)}
\end{align}
where the $\vartheta_{\mathrm{inlet}}$ is the prescribed temperature at the inlet, and ${\vartheta_{\mathrm{outlet}}(t)}$ is the outlet temperature---a time varying quantity. Thus, the higher the outlet temperature, the higher the performance. 

\subsection{Weak formulation} For obtaining numerical solutions, we have employed the finite element method (FEM) based on the single-field Galerkin formulation. We denote the weighting (or test) function by $\xi(\mathbf{x})$ and define the associated function spaces as follows:
\begin{align}
    \label{Eqn:GROM_trial_function_space}
    \mathcal{U}_{t} &:= \left\{\vartheta(\mathbf{x},t) \in H^{1}(\Omega) \; \big\vert \; \vartheta(\mathbf{x},t) = \vartheta^\mathrm{p}(\mathbf{x},t) \; \mathrm{on} \; \Gamma^\vartheta 
    \; \mathrm{and} \; 
    \vartheta(\mathbf{x},t) \big\vert_{s=0} = \vartheta_{\mathrm{inlet}} \right\} \\
    \label{Eqn:TProp_test_function_space}
    \mathcal{W} &:= \left\{\xi(\mathbf{x}) \in H^{1}(\Omega) \; \big\vert \; \xi(\mathbf{x}) = 0 \; \mathrm{on}\; \Gamma^\vartheta 
    \; \mathrm{and} \; 
    \xi(\mathbf{x}) \big\vert_{s=0} = 0
    \right\}
\end{align}
where $H^{1}(\Omega)$ is a standard Hilbert space \citep{brezzi2012mixed}. The Galerkin weak formulation reads: Find $\vartheta(\mathbf{x},t) \in \mathcal{U}_{t}$ such that we have 
\begin{align}
    \label{Eqn:GROM_Galerkin_formulation}
    \int_{\Omega}\rho_{s} \, c_{s}\, \xi(\mathbf{x})\,\frac{\partial \vartheta(\mathbf{x},t)}{\partial t}\, \mathrm{d}\Omega
    &+ \int_{\Omega} \mathrm{grad}[\xi(\mathbf{x})] \bullet \mathbf{K}(\mathbf{x}) \, \mathrm{grad}[\vartheta(\mathbf{x},t)] \,  \mathrm{d} \Omega \nonumber \\
    &+\int_{\Sigma} \chi\, \xi(\mathbf{x})\;\mathrm{grad}[\vartheta(\mathbf{x},t)]\bullet \widehat{\mathbf{t}}(\mathbf{x})\, \mathrm{d}\Gamma 
    \nonumber \\
    &+\int_{\Gamma^{\mathrm{N}}} \xi(\mathbf{x})\; 
    \Big(h_T \, \big(\vartheta(\mathbf{x},t) - \vartheta_{\mathrm{amb}} \big) + \varepsilon \, \sigma \,
    \big(\vartheta^{4}(\mathbf{x},t) - \vartheta^{4}_{\mathrm{amb}} \big)\Big) \, \mathrm{d}\Gamma \nonumber \\ 
    &\qquad = - \int_{\Gamma^q} \xi(\mathbf{x})\; q^\mathrm{p}(\mathbf{x},t)\, \mathrm{d}\Gamma
    \qquad \forall \xi(\mathbf{x})\in \mathcal{W}
\end{align}

We have implemented the above weak formulation using the ``Weak Form PDE'' capability available in COMSOL \citep{COMSOL}. This feature in COMSOL approximates the user-provided weak form using the FEM for spatial discretization and the backward difference formula (BDF) for temporal discretization. We have used second-order Lagrange elements and BDF with a minimum order of one to a maximum of five. 

\section{UNIQUENESS OF SOLUTIONS}
\label{Sec:S4_GROM_Properties}

Before we report numerical results, it is imperative to assess whether the model actually renders unique solutions. What we show in this section is that the solutions are unique when radiation is neglected, while only the ``non-negative'' solutions are unique when the radiation is present. We first present the relevant notation and preparatory remarks that make the derivation succinct. 

On the contrary, if the solutions are not unique, then there exist two time-dependent fields $\vartheta_{1}(\mathbf{x},t)$ and $\vartheta_{2}(\mathbf{x},t)$ each satisfying the governing equations \eqref{Eqn:GROM_GE_BoE}--\eqref{Eqn:GROM_GE_IC}. Consequently, their difference
\begin{align}
    \label{Eqn:GROM_w_definition}
    w(\mathbf{x},t) := 
    \vartheta_{1}(\mathbf{x},t) 
    - \vartheta_{2}(\mathbf{x},t) 
\end{align}
satisfies the following initial boundary value problem:
\begin{subequations}
\begin{alignat}{2}
    \label{Eqn:Temp_GE_BoE_w}
    &\rho_{s} \, c_{s} \frac{\partial w(\mathbf{x},t)}{\partial t} 
    - \mathrm{div}\big[\mathbf{K}(\mathbf{x}) \, \mathrm{grad}\big[w(\mathbf{x},t)\big]\big] = 0 
    && \quad \mathrm{in} \; \Omega \\
    \label{Eqn:Temp_GE_temperature_BC_w}
    &w(\mathbf{x},t) = 0
    && \quad \mathrm{on} \; \Gamma^{\vartheta} \\
    \label{Eqn:Temp_GE_heat_flux_BC_w}
    -&\widehat{\mathbf{n}}(\mathbf{x}) \bullet 
    \mathbf{K}(\mathbf{x}) \, \mathrm{grad}\big[w(\mathbf{x},t)\big] 
    = 0
    && \quad \mathrm{on} \; \Gamma^{q}  \\
    \label{Eqn:Temp_GE_non-linear_BC_w}
    -&\widehat{\mathbf{n}}(\mathbf{x}) \bullet 
    \mathbf{K}(\mathbf{x}) \, \mathrm{grad}\big[w(\mathbf{x},t)\big] 
    = h_T \, w(\mathbf{x},t) + \varepsilon \, \sigma \,
    \big(\vartheta^{4}_1(\mathbf{x},t) - \vartheta^{4}_2(\mathbf{x},t) \big)
    && \quad \mathrm{on} \; \Gamma^{\mathrm{N}}  \\
    \label{Eqn:Temp_GE_temp_jump_condition_w}
    &\llbracket w(\mathbf{x},t)\rrbracket = \mathbf{0} 
    && \quad \mathrm{on} \; \Sigma  \\
    \label{Eqn:Temp_GE_flux_jump_condition_w}
    -&\big\llbracket\mathbf{K}(\mathbf{x}) \, \mathrm{grad}\big[w(\mathbf{x},t)\big]\big\rrbracket = \dot{m} \, c_f \,  \mathrm{grad}\big[w(\mathbf{x},t)\big] \bullet \widehat{\mathbf{t}}(\mathbf{x}) 
    && \quad \mathrm{on} \; \Sigma  \\
    \label{Eqn:GeoThermal_GE_IC_w}
    &w(\mathbf{x},0) = 0
    &&\quad \mathrm{in} \; \Omega \\
    \label{Eqn:GeoThermal_GE_Inlet_w}
    &w(\mathbf{x},t)\big\vert_{s=0} = 0
    &&\quad \mbox{at inlet} 
\end{alignat}
\end{subequations}

Thus, to establish uniqueness, it suffices to show that $w(\mathbf{x},t) = 0$ is the \emph{only} solution to the above initial boundary value problem \eqref{Eqn:Temp_GE_BoE_w}--\eqref{Eqn:GeoThermal_GE_Inlet_w}. The next two theorems precisely show that the difference $w(\mathbf{x},t)$ vanishes by considering two separate cases: with and without radiation. 

\begin{theorem}[Uniqueness neglecting radiation]
    \label{Thm:GROM_Uniqueness_no_rad}
    Without radiation (i.e., $\varepsilon = 0$), 
    solutions under the mathematical model are 
    unique.
\end{theorem}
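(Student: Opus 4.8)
\begin{proofidea}
The natural strategy is an energy (a priori estimate) argument applied to the difference field $w(\mathbf{x},t)$ governed by \eqref{Eqn:Temp_GE_BoE_w}--\eqref{Eqn:GeoThermal_GE_Inlet_w} with $\varepsilon = 0$. The plan is to introduce the non-negative functional
\[
\mathcal{E}(t) := \frac{1}{2}\int_{\Omega} \rho_{s}\, c_{s}\, w^{2}(\mathbf{x},t)\, \mathrm{d}\Omega,
\]
which, by the homogeneous initial condition \eqref{Eqn:GeoThermal_GE_IC_w}, satisfies $\mathcal{E}(0) = 0$. The goal is to establish $\mathrm{d}\mathcal{E}/\mathrm{d}t \le 0$; combined with $\mathcal{E}\ge 0$ and $\mathcal{E}(0)=0$, this forces $\mathcal{E}(t)\equiv 0$ and hence $w\equiv 0$, which is exactly the reduction identified above as sufficient for uniqueness.
\end{proofidea}

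First I would multiply the homogeneous balance law \eqref{Eqn:Temp_GE_BoE_w} by $w$, integrate over $\Omega$, and recognize the transient term as $\mathrm{d}\mathcal{E}/\mathrm{d}t$. Integrating the conduction term by parts---crucially, accounting for the interior interface $\Sigma$ across which the flux is discontinuous---produces the bulk dissipation $\int_{\Omega}\mathrm{grad}[w]\bullet \mathbf{K}\,\mathrm{grad}[w]\,\mathrm{d}\Omega$, an external boundary term over $\partial\Omega$, and an interface term $\int_{\Sigma}\llbracket w\,\mathbf{K}\,\mathrm{grad}[w]\rrbracket\,\mathrm{d}\Gamma$. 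On $\partial\Omega$ the three pieces are disposed of using the homogeneous data: the $\Gamma^{\vartheta}$ contribution vanishes since $w = 0$ there \eqref{Eqn:Temp_GE_temperature_BC_w}, the $\Gamma^{q}$ contribution vanishes since the normal flux is zero \eqref{Eqn:Temp_GE_heat_flux_BC_w}, and the convective (Robin) condition \eqref{Eqn:Temp_GE_non-linear_BC_w} with $\varepsilon = 0$ yields the non-negative surface term $\int_{\Gamma^{\mathrm{N}}} h_{T}\,w^{2}\,\mathrm{d}\Gamma$.

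The crux is the interface term. Here I would invoke the product identity \eqref{jump_operator_product} to write $\llbracket w\,\mathbf{K}\,\mathrm{grad}[w]\rrbracket = \{\!\!\{w\}\!\!\}\,\llbracket \mathbf{K}\,\mathrm{grad}[w]\rrbracket + \llbracket w\rrbracket\bullet\{\!\!\{\mathbf{K}\,\mathrm{grad}[w]\}\!\!\}$; the temperature-continuity condition \eqref{Eqn:Temp_GE_temp_jump_condition_w}, $\llbracket w\rrbracket = \mathbf{0}$, annihilates the second term and reduces $\{\!\!\{w\}\!\!\}$ to the single-valued trace $w$ on $\Sigma$. Feeding in the flux-jump condition \eqref{Eqn:Temp_GE_flux_jump_condition_w} expresses the survivor through $\chi\,w\,(\mathrm{grad}[w]\bullet\widehat{\mathbf{t}})$, with $\chi = \dot{m}\,c_{f}$ constant along the wellbore. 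The key observation is that $w\,(\mathrm{grad}[w]\bullet\widehat{\mathbf{t}}) = \tfrac{1}{2}\,\partial(w^{2})/\partial s$ is an exact arc-length derivative, so integration along $\Sigma$ telescopes to the endpoint values; the inlet contribution vanishes by \eqref{Eqn:GeoThermal_GE_Inlet_w}, leaving a single non-negative boundary term proportional to $w^{2}$ evaluated at the outlet.

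Collecting everything gives the identity
\[
\frac{\mathrm{d}\mathcal{E}}{\mathrm{d}t} + \int_{\Omega}\mathrm{grad}[w]\bullet\mathbf{K}\,\mathrm{grad}[w]\,\mathrm{d}\Omega + \int_{\Gamma^{\mathrm{N}}} h_{T}\,w^{2}\,\mathrm{d}\Gamma + \tfrac{\chi}{2}\,w^{2}\big|_{\mathrm{outlet}} = 0 .
\]
Since $\mathbf{K}$ is positive-definite and $h_{T},\chi \ge 0$, the last three terms are non-negative, so $\mathrm{d}\mathcal{E}/\mathrm{d}t \le 0$ and the argument closes as above. I expect the main obstacle to be the careful bookkeeping of the interface contribution: correctly deriving the interior-interface term in the integration by parts, applying the jump product identity, and---most importantly---recognizing the advective vasculature coupling as a perfect arc-length derivative whose net effect is sign-definite only after the inlet datum is used. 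This last point is what distinguishes the argument from a textbook parabolic uniqueness proof, because the advective term $\chi\,\mathrm{grad}[w]\bullet\widehat{\mathbf{t}}$ is not sign-definite pointwise and becomes favorable solely through the boundary evaluation.
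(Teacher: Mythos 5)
Your proposal is correct and follows essentially the same route as the paper's own proof: the identical energy functional $\mathcal{E}(t)$, Green's identity with the interior interface term, the jump product identity \eqref{jump_operator_product} combined with the two jump conditions, and the telescoping of the advective term along $\Sigma$ to a non-negative outlet contribution after using the inlet datum. The only (immaterial) difference is that you set $\varepsilon = 0$ in the Robin condition at the outset, whereas the paper carries the radiation term through the estimate and discards it at the end so that the intermediate inequality can be reused in the proof of the radiation case (Theorem \ref{proof:uniqueness_proof_rad}).
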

\begin{proof} 
    Multiplying Eq.~\eqref{Eqn:Temp_GE_BoE_w} with 
    $w(\mathbf{x},t)$ and integrating over the 
    domain, we get: 
    \begin{alignat}{2}
        \label{Eqn:GROM_Uniqueness_norad_step_1}
        \int_{\Omega}
        w(\mathbf{x},t) \, 
        \rho_{s} \, c_{s} \frac{\partial w(\mathbf{x},t)}{\partial t} 
        \, \mathrm{d} \Omega 
        - \int_{\Omega}
        w(\mathbf{x},t) \, \mathrm{div}\big[\mathbf{K}(\mathbf{x}) \, \mathrm{grad}[w(\mathbf{x},t)]\big] \, 
        \mathrm{d} \Omega 
        = 0 
    \end{alignat}
    Noting that $\rho_s$ and $c_s$ are independent 
    of the temperature and time, we write: 
    \begin{align}
        \label{Eqn:GROM_Uniqueness_norad_step_2}
        \frac{\mathrm{d}}{\mathrm{d}t}\int_{\Omega}\frac{1}{2} 
        \, \rho_{s} \, c_{s} \, w^{2}(\mathbf{x},t) 
        \, \mathrm{d} \Omega 
        - \int_{\Omega}
        w(\mathbf{x},t) \, \mathrm{div}\big[\mathbf{K}(\mathbf{x}) 
        \, \mathrm{grad}[w(\mathbf{x},t)]\big] \, 
        \mathrm{d} \Omega 
        = 0 
    \end{align}
Applying the Green's identity to the second 
term in the above equation, we have: 
\begin{align}
    \label{Eqn:GROM_Uniqueness_norad_step_3}
    \frac{\mathrm{d}}{\mathrm{d}t}\int_{\Omega}\frac{1}{2} \, \rho_{s} \, c_{s} \, w^{2}(\mathbf{x},t) 
    \, \mathrm{d} \Omega 
    &+ \int_{\Omega}
    \mathrm{grad}\big[w(\mathbf{x},t)\big] \bullet \mathbf{K}(\mathbf{x}) \, \mathrm{grad}[w(\mathbf{x},t)] \, 
    \mathrm{d} \Omega \nonumber \\ 
    &- \int_{\partial \Omega}
    w(\mathbf{x},t) \, \widehat{\mathbf{n}}(\mathbf{x}) \bullet \mathbf{K}(\mathbf{x}) \, \mathrm{grad}[w(\mathbf{x},t)] \, 
    \mathrm{d} \Gamma \nonumber \\ 
    &- \int_{\Sigma}
    \big[\!\!\big[
    w(\mathbf{x},t) \, 
    \mathbf{K}(\mathbf{x}) \, \mathrm{grad}[w(\mathbf{x},t)] 
    \big]\!\!\big]
    \, \mathrm{d} \Gamma 
    = 0 
\end{align}
We note that the boundary $\partial\Omega$ is decomposed into $\Gamma^\vartheta$, $\Gamma^q$ and $\Gamma^\mathrm{N}$. Invoking Eqs.~\eqref{Eqn:Temp_GE_temperature_BC_w}--\eqref{Eqn:Temp_GE_non-linear_BC_w} to simplify the third term in the above equation, we arrive at the following: 
\begin{align}
    \label{Eqn:GROM_Uniqueness_norad_step_4}
    &\frac{\mathrm{d}}{\mathrm{d}t}\int_{\Omega} \frac{1}{2} \,\rho_{s} \, c_{s} \, w^{2}(\mathbf{x},t) 
    \, \mathrm{d} \Omega 
    + \int_{\Omega}
    \mathrm{grad}\big[w(\mathbf{x},t)\big] \bullet \mathbf{K}(\mathbf{x}) \, \mathrm{grad}[w(\mathbf{x},t)] \, 
    \mathrm{d} \Omega \nonumber \\ 
    &\qquad + \int_{\Gamma^\mathrm{N}}
    h_T \,w^{2}(\mathbf{x},t) \,\mathrm{d} \Gamma 
    + \int_{\Gamma^\mathrm{N}} w(\mathbf{x},t) \, \varepsilon \, \sigma \,\big(\vartheta^{4}_1(\mathbf{x},t) - \vartheta^{4}_2(\mathbf{x},t) \big) \, 
    \mathrm{d} \Gamma \nonumber\\
    &\qquad- \int_{\Sigma}
    \big[\!\!\big[
    w(\mathbf{x},t) \, 
    \mathbf{K}(\mathbf{x}) \, \mathrm{grad}[w(\mathbf{x},t)] 
    \big]\!\!\big]\,
    \mathrm{d} \Gamma
    = 0 
\end{align}

Since $\mathbf{K}(\mathbf{x})$ is positive definite, the second integral in the above equation is non-negative. The third integral is also non-negative as $h_T \geq 0$ and $w^{2}(\mathbf{x},t) \geq 0$. We, therefore, end up with the following inequality:
\begin{align}
    \label{Eqn:GROM_Uniqueness_norad_step_5}
    \frac{\mathrm{d}}{\mathrm{d}t}\int_{\Omega} \frac{1}{2} \,\rho_{s} \, c_{s} \, w^{2}(\mathbf{x},t) 
    \, \mathrm{d} \Omega 
    &+ \int_{\Gamma^\mathrm{N}} w(\mathbf{x},t) \, \varepsilon \, \sigma \,\big(\vartheta^{4}_1(\mathbf{x},t) - \vartheta^{4}_2(\mathbf{x},t) \big) \, 
    \mathrm{d} \Gamma \nonumber\\
    &\qquad- \int_{\Sigma}
    \big[\!\!\big[
    w(\mathbf{x},t) \, 
    \mathbf{K}(\mathbf{x}) \, \mathrm{grad}[w(\mathbf{x},t)] 
    \big]\!\!\big]\,
    \mathrm{d} \Gamma
    \leq 0 
\end{align}
Applying the property given by Eq.~\eqref{jump_operator_product} to the third term, we get:
\begin{align}
    \label{Eqn:GROM_Uniqueness_norad_step_6}
    \frac{\mathrm{d}}{\mathrm{d}t}\int_{\Omega}\frac{1}{2} \, \rho_{s} \, c_{s} \, w^{2}(\mathbf{x},t) 
    \, \mathrm{d} \Omega 
    &+ \int_{\Gamma^\mathrm{N}} w(\mathbf{x},t) \, \varepsilon \, \sigma \,\big(\vartheta^{4}_1(\mathbf{x},t) - \vartheta^{4}_2(\mathbf{x},t) \big) \, 
    \mathrm{d} \Gamma \nonumber\\
    &- \int_{\Sigma}
    \{\!\!\{w(\mathbf{x},t)\}
    \big[\!\!\big[\mathbf{K}(\mathbf{x}) \, \mathrm{grad}[w(\mathbf{x},t)] 
    \big]\!\!\big] 
    \,
    \mathrm{d} \Gamma \nonumber \\
    &- \int_{\Sigma}\big[\!\!\big[w(\mathbf{x},t) 
    \big]\!\!\big] \bullet 
    \{\!\!\{\mathbf{K}(\mathbf{x}) \, \mathrm{grad}[w(\mathbf{x},t)] \}\!\!\} \,
    \mathrm{d} \Gamma
    \leq 0 
\end{align}
The jump condition \eqref{Eqn:Temp_GE_temp_jump_condition_w} implies that the fourth term vanishes. Further, using the other jump condition \eqref{Eqn:Temp_GE_flux_jump_condition_w} in the third term, we write:
\begin{align}
    \label{Eqn:GROM_Uniqueness_norad_step_7}
    \frac{\mathrm{d}}{\mathrm{d}t}\int_{\Omega}\frac{1}{2} \, \rho_{s} \, c_{s} \, w^{2}(\mathbf{x},t) 
    \, \mathrm{d} \Omega 
    &+ \int_{\Gamma^\mathrm{N}} w(\mathbf{x},t) \, \varepsilon \, \sigma \,\big(\vartheta^{4}_1(\mathbf{x},t) - \vartheta^{4}_2(\mathbf{x},t) \big) \, 
    \mathrm{d} \Gamma \nonumber\\
    &+ \int_{\Sigma}\dot{m} \, c_f \,  \{\!\!\{ w(\mathbf{x},t)\}\!\!\} \, \mathrm{grad}\big[w(\mathbf{x},t)\big] \bullet \widehat{\mathbf{t}}(\mathbf{x}) \,
    \mathrm{d} \Gamma 
    \leq 0 
\end{align}
Noting that $\dot{m}$ and $c_{f}$ are constants, integrating the third term along the vasculature, and using Eq.~\eqref{Eqn:GeoThermal_GE_Inlet_w}, we establish the following inequality:
\begin{align}
    \label{Eqn:GROM_Uniqueness_norad_step_8}
    \frac{\mathrm{d}}{\mathrm{d}t}\int_{\Omega}\frac{1}{2} \, \rho_{s} \, c_{s} \, w^{2}(\mathbf{x},t) 
    \, \mathrm{d} \Omega 
    + \int_{\Gamma^\mathrm{N}} w(\mathbf{x},t) \, \varepsilon \, \sigma \,\big(\vartheta^{4}_1(\mathbf{x},t) - \vartheta^{4}_2(\mathbf{x},t) \big) \, 
    \mathrm{d} \Gamma 
    +\frac{\dot{m} \, c_f}{2} \,   w^2(\mathbf{x},t) \Big\vert_{\mathrm{outlet}}
    \leq 0 
\end{align}
Since $\dot{m}$ and $c_f$ are non-negative, the third term is non-negative, giving rise to the following inequality:
\begin{align}
    \label{Eqn:GROM_Uniqueness_norad_step_9}
    \frac{\mathrm{d}}{\mathrm{d}t}\int_{\Omega}\frac{1}{2} \, \rho_{s} \, c_{s} \, w^{2}(\mathbf{x},t) 
    \, \mathrm{d} \Omega 
    + \int_{\Gamma^\mathrm{N}} w(\mathbf{x},t) \, \varepsilon \, \sigma \,\big(\vartheta^{4}_1(\mathbf{x},t) - \vartheta^{4}_2(\mathbf{x},t) \big) \, 
    \mathrm{d} \Gamma 
    \leq 0 
\end{align}

We now specialize in the vanishing radiation case. Substituting $\varepsilon = 0$ in the above inequality, we get:  
\begin{align}
    \label{Eqn:GROM_Uniqueness_norad_step_10}
    \frac{\mathrm{d}}{\mathrm{d}t}\int_{\Omega}\frac{1}{2} \, \rho_{s} \, c_{s} \, w^{2}(\mathbf{x},t) 
    \, \mathrm{d} \Omega 
    \leq 0 
\end{align}
For convenience, let us define: 
\begin{align}
    \label{Eqn:GROM_Uniqueness_norad_step_11}
    \mathcal{E}(t) := \int_{\Omega}\frac{1}{2} \,                
    \rho_{s} \, c_{s} \, w^{2}(\mathbf{x},t) 
    \, \mathrm{d} \Omega
\end{align}
Then inequality \eqref{Eqn:GROM_Uniqueness_norad_step_10} can be compactly written as follows: 
\begin{align}
    \label{Eqn:GROM_Uniqueness_norad_step_12}
    \frac{\mathrm{d}\mathcal{E}(t)}{\mathrm{d}t} \leq 0
\end{align}
From the property of integration \citep{bartle2000introduction}, 
the above inequality implies that 
\begin{align}
    \label{Eqn:GROM_Uniqueness_norad_step_13}
    \mathcal{E}(t) \leq \mathcal{E}(t_0) \quad \forall t \leq t_0 
\end{align}
Noting that $\mathcal{E}(t_0) = 0$ as $w(\mathbf{x},t=0) = w_{\mathrm{initial}} = 0$, we arrive at the following:
\begin{align}
    \label{Eqn:GROM_Uniqueness_norad_step_14}
    \mathcal{E}(t) \leq 0
\end{align}
But, $\mathcal{E}(t)$ is a non-negative quantity, as the integrand defining this quantity (cf. Eq.~\eqref{Eqn:GROM_Uniqueness_norad_step_11}) is non-negative---$\rho_s$ and $c_s$ are positive, and $w^{2}(\mathbf{x},t)$ is a non-negative field. This discourse implies 
that 
\begin{align}
    \label{Eqn:GROM_Uniqueness_norad_step_15}
    \mathcal{E}(t) = 0
\end{align}
which further implies that 
\begin{align}
    \label{Eqn:GROM_Uniqueness_norad_step_16}
    w(\mathbf{x},t) = 0 \quad \forall \mathbf{x}, \forall t
\end{align}
Noting that $w(\mathbf{x},t) := \vartheta_{1}(\mathbf{x},t) - \vartheta_{2}(\mathbf{x},t)$, we have established the uniqueness of the solutions in the absence of radiation.
\end{proof}

\begin{theorem}[Uniqueness considering radiation]
\label{proof:uniqueness_proof_rad}
With radiation (i.e., $\varepsilon \neq 0$), the \emph{non-negative} solutions under the mathematical model are unique.
\end{theorem}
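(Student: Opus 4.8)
The plan is to piggyback directly on the derivation already carried out in Theorem~\ref{Thm:GROM_Uniqueness_no_rad}. The crucial observation is that that derivation never invoked $\varepsilon = 0$ until its very last stage: the volume conduction term, the convective boundary term on $\Gamma^\mathrm{N}$, and both vascular jump terms were all disposed of using only positive-definiteness of $\mathbf{K}(\mathbf{x})$, the sign of $h_T$ and $\chi$, and the homogeneous jump conditions~\eqref{Eqn:Temp_GE_temp_jump_condition_w}--\eqref{Eqn:Temp_GE_flux_jump_condition_w}. Consequently, I would simply re-invoke the intermediate inequality~\eqref{Eqn:GROM_Uniqueness_norad_step_9}, which retains the radiation contribution intact:
\begin{align*}
    \frac{\mathrm{d}\mathcal{E}(t)}{\mathrm{d}t}
    + \int_{\Gamma^\mathrm{N}} w(\mathbf{x},t) \, \varepsilon \, \sigma \,
    \big(\vartheta_1^{4}(\mathbf{x},t) - \vartheta_2^{4}(\mathbf{x},t)\big) \, \mathrm{d}\Gamma
    \leq 0,
\end{align*}
where $\mathcal{E}(t)$ is the quadratic energy defined in~\eqref{Eqn:GROM_Uniqueness_norad_step_11}. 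Everything then hinges on controlling the sign of the radiation integral.

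The heart of the argument, and the step I expect to be the main obstacle, is to show that this radiation boundary integral is non-negative so that it may be discarded. Writing $w = \vartheta_1 - \vartheta_2$ and factoring the difference of fourth powers,
\begin{align*}
    w \, \big(\vartheta_1^{4} - \vartheta_2^{4}\big)
    = (\vartheta_1 - \vartheta_2)^{2} \, (\vartheta_1 + \vartheta_2) \, \big(\vartheta_1^{2} + \vartheta_2^{2}\big),
\end{align*}
one sees that the squared factor and the sum-of-squares factor are manifestly non-negative, while the factor $(\vartheta_1 + \vartheta_2)$ is non-negative \emph{precisely} when $\vartheta_1, \vartheta_2 \geq 0$. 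This is exactly where the restriction to non-negative solutions enters, and it is the genuine crux: in the radiation-free case no sign information on $\vartheta_1, \vartheta_2$ was ever needed, whereas the quartic nonlinearity forces us to exploit that (absolute) temperatures are non-negative. Since $\varepsilon, \sigma \geq 0$ as well, the integrand is non-negative pointwise on $\Gamma^\mathrm{N}$, so the whole integral is non-negative.

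Having established this, I would drop the non-negative radiation term to recover $\mathrm{d}\mathcal{E}(t)/\mathrm{d}t \leq 0$, at which point the remainder of the proof is \emph{verbatim} that of Theorem~\ref{Thm:GROM_Uniqueness_no_rad}: monotonicity yields $\mathcal{E}(t) \leq \mathcal{E}(0)$, the homogeneous initial condition~\eqref{Eqn:GeoThermal_GE_IC_w} gives $\mathcal{E}(0) = 0$, non-negativity of $\mathcal{E}$ then forces $\mathcal{E}(t) = 0$ for all $t$, and hence $w(\mathbf{x},t) \equiv 0$, i.e.\ $\vartheta_1 = \vartheta_2$. No machinery beyond the first theorem is required; the entire novelty is the elementary factorization together with the sign hypothesis that renders the radiation term dissipative rather than merely of indefinite sign.
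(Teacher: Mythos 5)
Your proposal is correct and follows essentially the same route as the paper's own proof: both start from inequality~\eqref{Eqn:GROM_Uniqueness_norad_step_9}, factor the radiation integrand as $w\,(\vartheta_1^4-\vartheta_2^4) = w^2(\vartheta_1+\vartheta_2)(\vartheta_1^2+\vartheta_2^2)$, invoke the non-negativity of the two solutions to conclude the boundary term is non-negative, and then discard it to recover $\mathrm{d}\mathcal{E}/\mathrm{d}t \leq 0$ and finish exactly as in Theorem~\ref{Thm:GROM_Uniqueness_no_rad}. Your identification of the sign of $(\vartheta_1+\vartheta_2)$ as the precise point where the non-negativity hypothesis enters matches the paper's reasoning.
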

\begin{proof}
    Most of the steps remain the same as in Theorem \ref{Thm:GROM_Uniqueness_no_rad}. We take up from Eq.~\eqref{Eqn:GROM_Uniqueness_norad_step_9}: 
    \begin{align}
        \label{Eqn:GROM_Uniqueness_rad_step_1}
        \frac{\mathrm{d}}{\mathrm{d}t}\int_{\Omega}\frac{1}{2} \, \rho_{s} \, c_{s} \, w^{2}(\mathbf{x},t) 
       \, \mathrm{d} \Omega 
        + \int_{\Gamma^\mathrm{N}} w(\mathbf{x},t) \, \varepsilon \, \sigma \, \big(\vartheta^{4}_1(\mathbf{x},t) - \vartheta^{4}_2(\mathbf{x},t) \big) \,
        \mathrm{d} \Gamma
       \leq 0 
\end{align}
Expanding the integrand in the second integral and noting that $w(\mathbf{x},t) := \vartheta_{1}(\mathbf{x},t) - \vartheta_{2}(\mathbf{x},t)$, we write: 
\begin{align}
    \label{Eqn:GROM_Uniqueness_rad_step_2}
    \frac{\mathrm{d}}{\mathrm{d}t}\int_{\Omega}\frac{1}{2} \, \rho_{s} \, c_{s} \, w^{2}(\mathbf{x},t) 
    \, \mathrm{d} \Omega 
    + \int_{\Gamma^\mathrm{N}} w^2(\mathbf{x},t) \, \varepsilon \, \sigma \, 
    \big(\vartheta_1(\mathbf{x},t) + \vartheta_2(\mathbf{x},t) \big) \, \big(\vartheta^{2}_1(\mathbf{x},t) + \vartheta^{2}_2(\mathbf{x},t) \big) \,
    \mathrm{d} \Gamma
    \leq 0 
\end{align}
Since $\vartheta_{1}(\mathbf{x},t)$ and $\vartheta_{2}(\mathbf{x},t)$ are non-negative solutions, $\vartheta_1(\mathbf{x},t) + \vartheta_2(\mathbf{x},t) \geq 0$. Also, the term $\vartheta^{2}_1(\mathbf{x},t) + \vartheta^{2}_2(\mathbf{x},t)$ is patently non-negative. Further noting that $\varepsilon \geq 0$ and $\sigma > 0$, we conclude that the second integral is non-negative, yielding:
\begin{align}
    \label{Eqn:GROM_Uniqueness_rad_step_3}
    \frac{\mathrm{d}}{\mathrm{d}t}\int_{\Omega}\frac{1}{2} \, \rho_{s} \, c_{s} \, w^{2}(\mathbf{x},t) 
    \, \mathrm{d} \Omega 
    \leq 0 
\end{align}
The above inequality is the same as in Eq.~\eqref{Eqn:GROM_Uniqueness_norad_step_13}. The rest of the proof, again, remains the same as in Theorem \ref{Thm:GROM_Uniqueness_no_rad}. 
\end{proof}

The above uniqueness theorems provide confidence in the proposed mathematical model. 

\section{NUMERICAL RESULTS: DOMAIN SIZE AND BOUNDARY CONDITIONS}
\label{Sec:S5_GROM_NR_Domain_BCs}

\begin{figure}[h]
    \centering
    \includegraphics[scale=0.55]{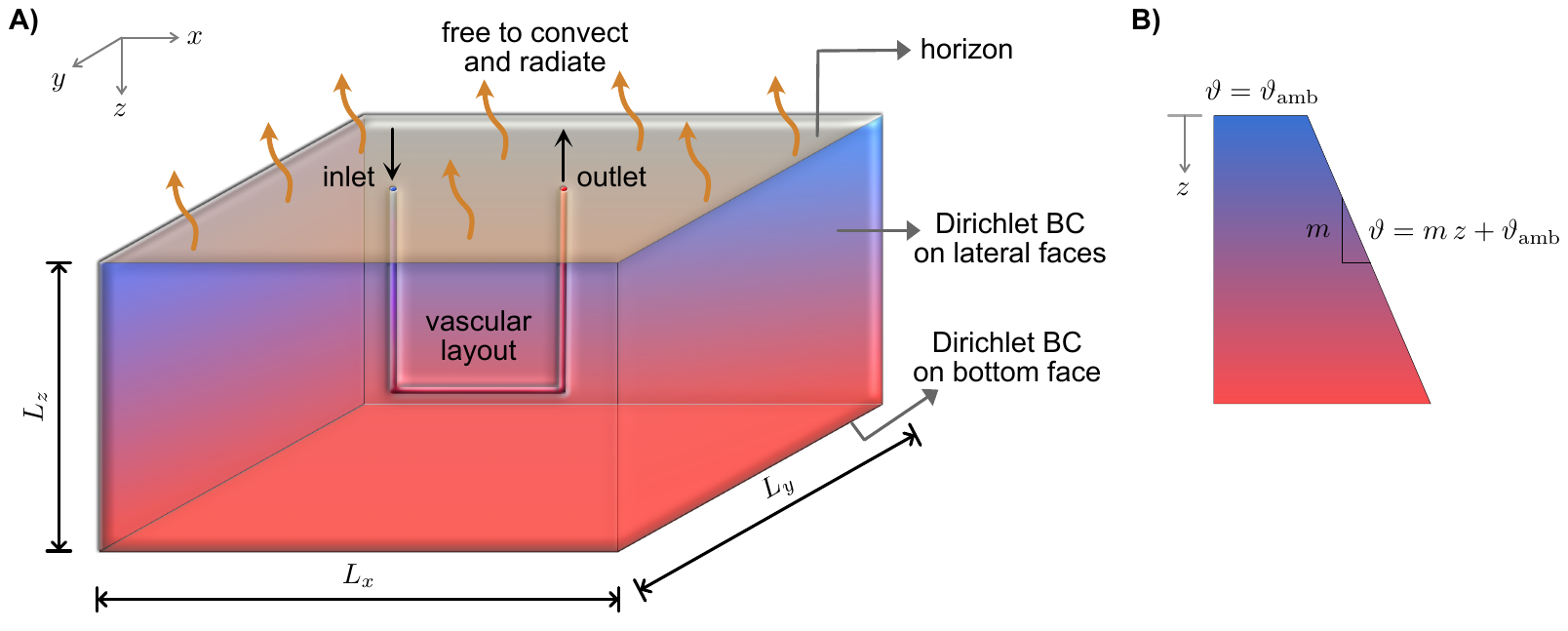}
    \caption{\textsf{Test problem}. \textbf{A)} The domain is a cuboid of size $L_x \times L_y \times L_z$,  featuring a vascular layout within its subsurface. The top face is free to convect and radiate, while the remaining faces (i.e., lateral and bottom faces) are subject to a Dirichlet boundary condition (BC). \textbf{B)} The prescribed temperature increases linearly with depth at a slope of $m$. \label{Fig3:GROM_Problem_setup}}
\end{figure}

We consider a representative initial boundary value problem in which the geothermal system is idealized as a cuboid with dimensions length $L_x$, width $L_y$, and depth $L_z$, as shown in \textbf{Fig.~\ref{Fig3:GROM_Problem_setup}}. For convenience, we use a Cartesian coordinate system with rectangular axes denoted by $(x,y,z)$. The top face---which is at the horizon---is free to convect and radiate. The bottom and lateral faces are subject to a prescribed temperature field; in mathematical parlance, these five subsurface faces are prescribed with Dirichlet boundary conditions. It is a good approximation to consider the temperature ($\vartheta$) increases linearly with the depth ($z$). Mathematically, we write this variation as follows: 
\begin{align}
    \label{Eqn:GROM_Variation_of_temp_with_depth}
    \vartheta = m z + \vartheta_{\mathrm{amb}}
\end{align}
where $m$ is the slope of the variation (geothermal gradient), and $\vartheta_{\mathrm{amb}}$ the ambient temperature. We use this variation to prescribe the initial and Dirichlet boundary conditions. Table \ref{Table:GROM_param_values} provides the parameters used in the numerical simulations.

\begin{table}[h]
    \centering 
    \caption{Parameters used in numerical simulations. \label{Table:GROM_param_values}} 
    \begin{tabular}{||c|c||}
        \hline
        \textbf{Parameter} & \textbf{Value} \\ [0.5ex] 
        \hline\hline
        Domain length ($L_x$) & 6000 m   
        \\ \hline 
        Domain width ($L_y$) & 6000 m   
        \\ \hline 
        Domain height ($L_z$) & 10000 m   
        \\ \hline 
        U-shape vascular depth ($d$) & 5000 m   
        \\ \hline 
        Comb-shape vascular depth ($d$) & 8000 m   
        \\ \hline 
        U-shape horizontal spacing ($s$) & 3000 m   
        \\ \hline 
        Comb-shape horizontal spacing ($s$) & 900 m  
        \\ \hline \hline
        Fluid & Water  
        \\ \hline 
        Fluid's density ($\rho_f$) & 1000 $\mathrm{kg/m^3}$ 
        \\ \hline
        Fluid's specific heat capacity ($c_f$) & 4183 $\mathrm{J/(kg \cdot K)}$ 
        \\ \hline 
        Mass flow rate ($\dot{m}$) & 5-60 $\mathrm{kg/sec}$ 
        \\ \hline \hline 
         Solid's density ($\rho_s$) & 2500 $\mathrm{kg/m^3}$ 
        \\ \hline 
        Solid's specific heat capacity ($c_s$) & 790 $\mathrm{J/(kg \cdot K)}$ 
        \\ \hline 
        Thermal conductivity ($k_{s}$) & 3.5 $\mathrm{W/(m \cdot K)}$
        \\ \hline \hline 
        Thermal gradient ($\mathrm{slope}$) & 30 $\mathrm{K/km}$   
        \\ \hline
        Heat transfer coefficient ($h_T$) & 0.5 $\mathrm{W/(m \cdot K)}$
        \\ \hline
        Ambient temperature ($\vartheta_{\mathrm{amb}}$) & 303.15 $\mathrm{K}$ \\
        \hline
        Inlet temperature ($\vartheta_{\mathrm{inlet}}$) & 303.15 $\mathrm{K}$ \\
        \hline \hline 
        Time-stepping scheme & backward difference formula (BDF) \\
        \hline
        Time step ($\Delta t$) & $1 \times 10^6$ secs \\ 
        \hline
        Total time ($\mathcal{T}$) & $2 \times 10^9$ secs \\
        \hline
    \end{tabular}
\end{table}

\subsection{Domain size} The selection of domain size depends on accurately capturing solution fields while managing computational costs. An obvious aspect of computational cost is the time required to obtain the solution. Additionally, modeling the Earth's surface, including buried channels, can make mesh creation both costly and time-intensive. Thus, determining an appropriate domain size that meets these requirements is crucial. This section aims to do just that.

To make the idea more tangible, we consider a domain of depth $d = 10$ km, comprising a U-shaped vascular layout with a constant channel spacing of $s = 3$ km. Both the length $L_x$ and width $L_y$ of the domain are determined by a constant multiplier of the spacing, denoted as $\alpha \, s$, resulting in $L_x = L_y = s + 2 \, \alpha \, s$. With $\alpha$ ranging from 0.5 to 2, this corresponds to a variation in length and width from 6 to 15 km. To examine the impact of domain size on the solution field, temperature data is plotted along a horizontal line at a depth of 0.7$d$ in the plane below the horizontal segment of the vascular layout, at the 63-year mark; see \textbf{Fig.~\ref{Fig4:GROM_domain_size}}. The length normalization takes the following mathematical form: 
\begin{align}
    \mathrm{Normalized \, length} 
    = \frac{x}{s}-(\alpha+0.5)
\end{align}
where $x$ is any point on the horizontal line.

The key criterion is that the domain size is deemed appropriate as long as the boundaries of the domain do not affect the solution field. When the value of $\alpha$ decreases from 2 to 0.5, the plots indicate that the solution fields converge and remain unaffected by the boundaries. The temperature remains consistent except for the two peaks occurring at the intersection of the horizontal line and the two legs of the U-shaped layout. A zoomed-in section of the plot further illustrates that, aside from the areas near the channels, the temperature across other parts remains uniform. Therefore, in response to our first scientific question (Q1), we conclude that $\alpha = 0.5$ (i.e., domain size $L_x = L_y =$ 6 km) can achieve a reliable numerical solution. Thus this value is used in all further studies.

\begin{figure}[h]
    \centering
    \includegraphics[scale=0.6]{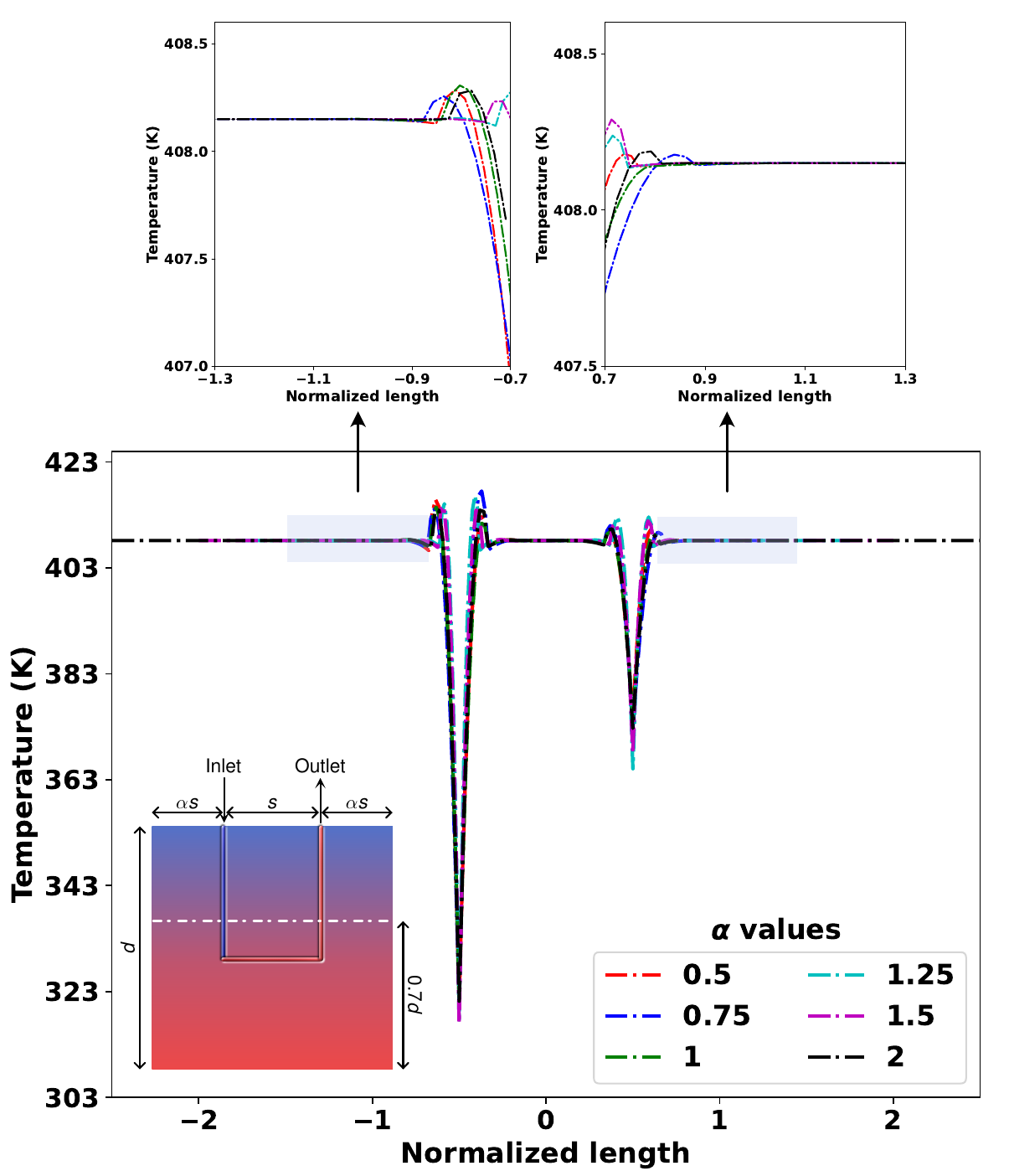}
    \caption{\textsf{Domain size:} This figure guides the selection of the domain size by observing its influence on the solution field. We consider a domain with depth $d$ and a horizontal dimension of $(1 + 2 \, \alpha)s$, where $s$ represents the spacing of the injection and production channels. For various values of $\alpha$ ranging from 0.5 to 2, we plotted the temperature at a depth of $0.7d$ after 63 years of geothermal system operation and for a mass flow rate of 35 kg/s. The solution profile remains mostly consistent within the chosen range of $\alpha$. \emph{Inference:} As a result, we recommend setting the domain size to be twice the spacing between the vertical legs in the vascular layout, corresponding to $\alpha = 0.5$.
    \label{Fig4:GROM_domain_size}}
\end{figure}

\subsection{Effect of boundary conditions}
We explore how the choice of boundary conditions impacts the solution field, specifically the outlet temperature in \textbf{Fig.~\ref{Fig5:GROM_BCs_check}}. We observe an invariance between the outlet temperature under two boundary conditions: Neumann and Dirichlet. This resemblance might be attributed to the far-field boundaries, meaning they are situated considerably from the channel through which the fluid flows. To simulate geothermal systems, our domain size should be large enough so that the boundary conditions do not significantly influence our solution field. The findings verify that the selected domain size for the geothermal system is robust, and the choice of boundary conditions does not influence the system's efficiency, answering our second science question (Q2).

\begin{figure}[h]
    \centering
    \includegraphics[scale=0.35]{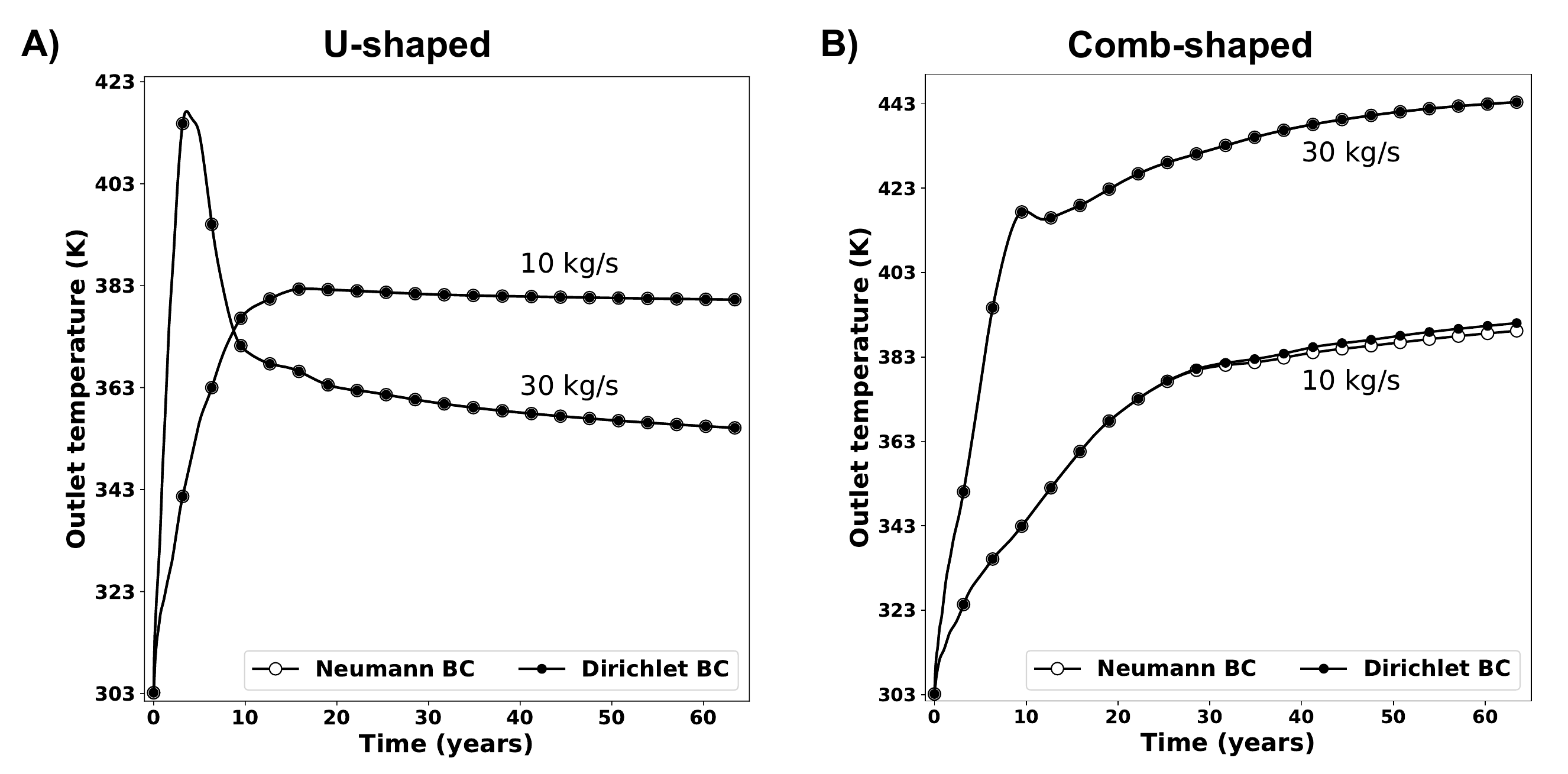}
    \caption{\textsf{Boundary conditions:} The figure reveals a close resemblance in the outlet temperatures under Neumann and Dirichlet boundary conditions. We considered two mass flow rates of 10 kg/s and 30 kg/s and plotted the temperatures over a 63-year duration. The results are reported for \textbf{A)} U-shaped and \textbf{B)} comb-shaped layouts. \emph{Inference:} We can conclude that the numerical solution from the model remains unaffected by the boundary conditions and behaves akin to a far-field condition.
    \label{Fig5:GROM_BCs_check}}
\end{figure}

\subsection{Temperature profile}
In \textbf{Fig.~\ref{Fig6:GROM_temperature_profile}}, we illustrate the temperature profile of the geothermal systems modeled in this study, aiming to enhance the readers' understanding. The profiles depicted correspond to the two vascular layouts outlined in Sect. \ref{Sec:S2_GROM_Setup}. In the \emph{exterior view}, the color scheme, starting with deep blue at the surface to deep red at the bottom, signifies an increase in temperature of 30 K/km as the depth into the subsurface increases. The two different vascular layouts embedded inside the domain through which coolant flows are shown under \emph{translucent view}. The water entering at the inlet with ambient temperature reaches its highest temperature at the maximum subsurface depth and experiences some heat loss as it returns to the outlet at the surface. The coloration near the layouts in the \emph{section view} distinctly represents the flow of fluid and its varying temperature as it moves along different points of the channel. Each sub-figure significantly reinforces the effectiveness of our model in depicting the geothermal system and the fluid flow within the channel.

\begin{figure}[h]
    \centering
    \includegraphics[scale=0.7]{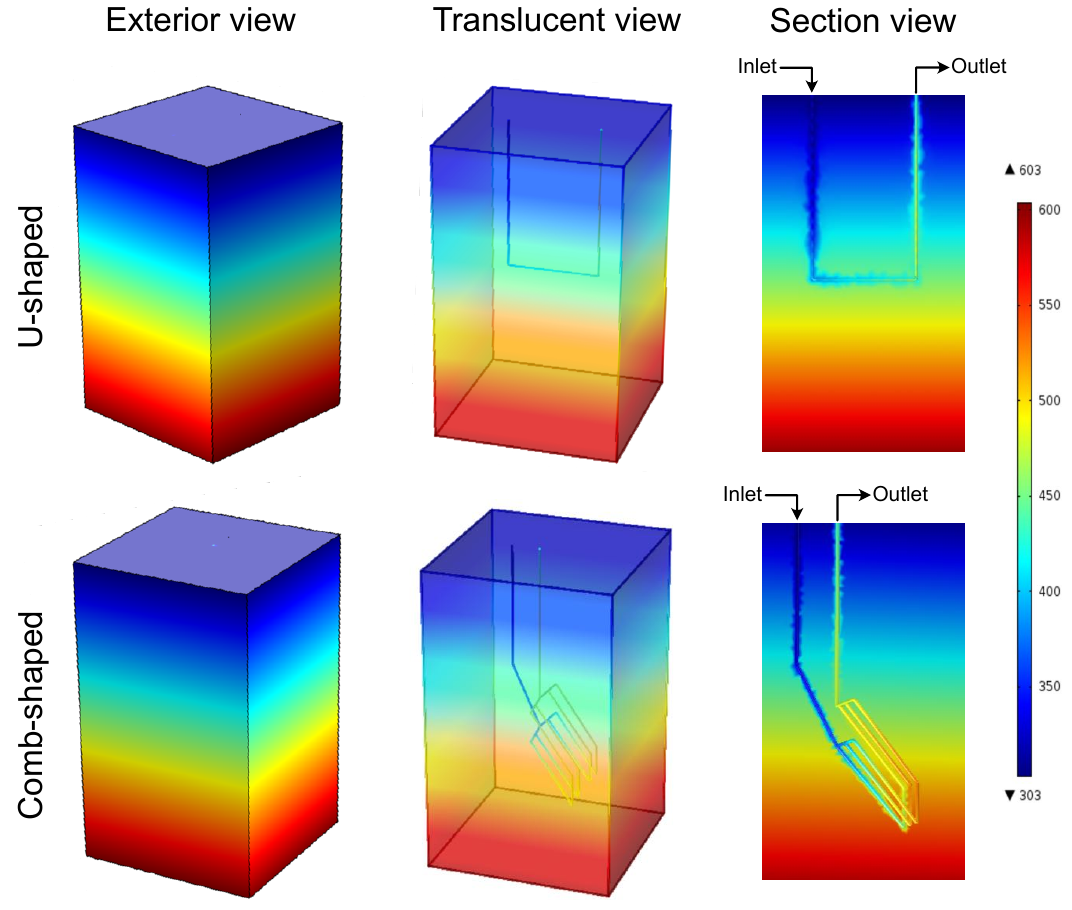}
    \caption{\textsf{3D temperature profile:} The figure illustrates the temperature profile within closed-loop geothermal systems for U-shaped (top panel) and comb-shaped (bottom) layouts. The temperature gradient is 30 K/km, with the temperature of the top surface equal to the ambient temperature (which is indicated in blue color in the graphics). A coolant (water) at ambient temperature enters the channel at the inlet. As it traverses along the vasculature, the water's temperature increases due to heat conduction. As a result, the outlet produces hot water, with the actual temperature varying depending on parameters and the vascular layout. As evident from the above figures, the proposed reduced-order model (ROM) successfully captures the detailed temperature distribution within the subsurface.
    \label{Fig6:GROM_temperature_profile}}
\end{figure}

\section{OUTLET TEMPERATURE AND EFFICIENCY}
\label{Sec:S6_GROM_Efficiency}

In this section, we study the efficiency of geothermal systems, focusing on production capacity---a key determinant of whether the system warrants further investigation and real-field application. While doing so, we introduce \emph{breakdown}, as a point of time at which the efficiency drops down for the system. We first examine the efficiency with respect to the outlet temperature and then analyze the power production, together answering the third scientific question (Q3).

\subsection{Outlet temperature}
As described in Sect.~\ref{Sec:S2_GROM_Setup}, both vascular layouts involve extracting a hot fluid at the outlet. The coefficient of performance of the system, given in Eq. \eqref{Eqn:GROM_coeff_of_performance}, increases with an increase in the outlet temperature $\vartheta_{\mathrm{outlet}}$, as $\vartheta_{\mathrm{inlet}}$ is constant (cf. Eq. \eqref{Eqn:GROM_GE_Inlet}). Consequently, a plot of $\vartheta_{\mathrm{outlet}}$ over time can effectively represent the system's coefficient of performance. \textbf{Figure~\ref{Fig7:GROM_Outlet_temperature}} illustrates the outlet temperature measurements for both vascular layouts over a 63-year period and under various mass flow rates. One can make the following inferences from the plot. In the case of the \emph{U-shaped} layout in \textbf{Fig.~\ref{Fig7:GROM_Outlet_temperature}}A), the peak outlet temperature increases as the mass flow rate increases reaching as high as 420 K for 30 kg/s. However, lower mass flow rate values, such as 5 kg/s and 10 kg/s, do not lead to the system's breakdown, but the higher mass flow rates cause the breakdown sooner: 15 kg/s at 10 years, 20 kg/s at 7 years, 25 kg/s at 5 years, and 30 kg/s at 4 years.

In the case of the \emph{comb-shaped} layout in \textbf{Fig.~\ref{Fig7:GROM_Outlet_temperature}}B), as the mass flow rate increases, the peak outlet temperature also rises, reaching a maximum of 440 K for 30 kg/s. Also, none of the mass flow rates lead to a breakdown, offering a significant advantage over the U-shaped layout and the potential for substantially higher average power production.

\begin{figure}[h]
    \centering
    \includegraphics[scale=0.375]{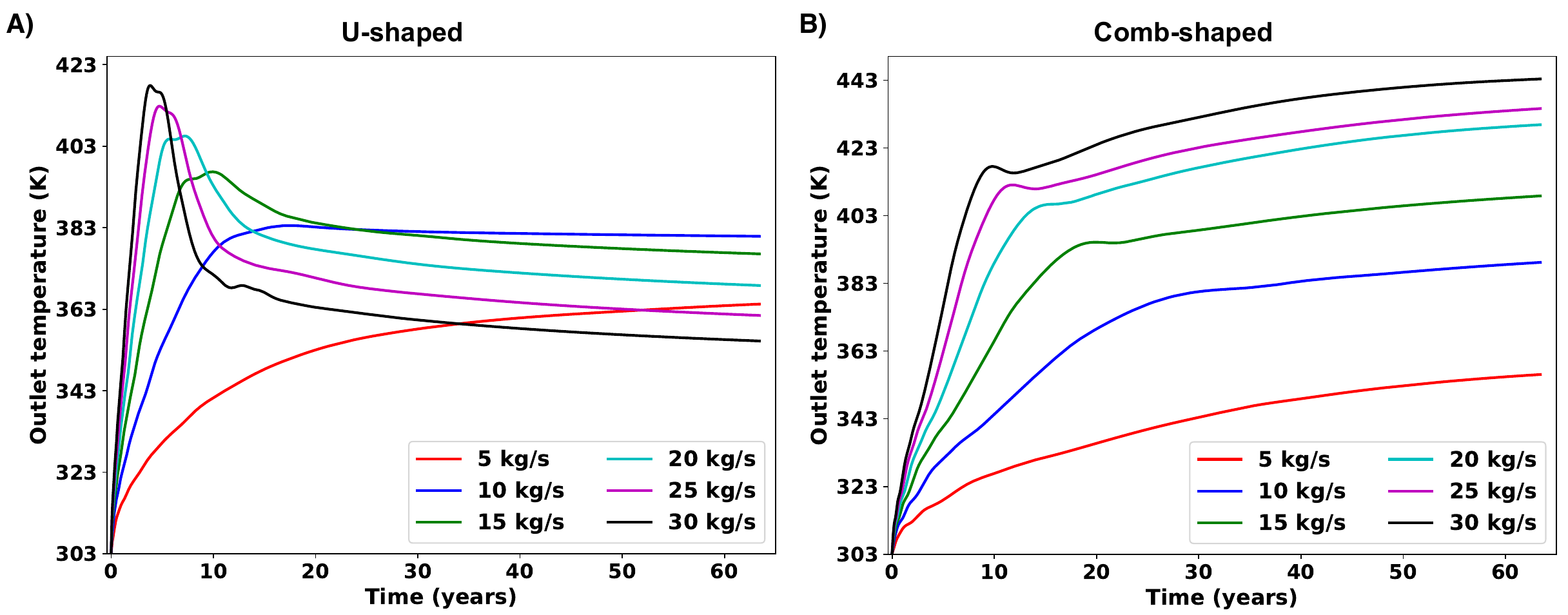}
    \caption{\textsf{Outlet temperature.} The figure collates the results from a parametric study on the temporal evolution of the outlet temperature for different mass flow rates. \textbf{A)} In a U-shaped layout, an increase in mass flow rate leads to a rise in outlet temperature. However, a higher mass flow rate results in an earlier breakdown. \textbf{B)} On the other hand, in a comb-shaped layout, the outlet temperature rises with increasing mass flow rates. Notably, none of the curves, for various mass flow rates, reaches a breakdown point within the 63-year time frame. \emph{Inference:} Given the higher outlet temperature and the absence of breakdown, the comb-shaped outperforms the U-shaped for higher mass flow rates.
    \label{Fig7:GROM_Outlet_temperature}}
\end{figure}

\subsection{Power production} To gain a deeper insight into the efficiency of the geothermal system, we further examine the power production over time and the average power produced throughout the system's life, which is 63 years in our study. The power produced at any time is calculated utilizing the following formula:
\begin{align}
    \label{GROM_Eqn_Power}
    \mathrm{Power}(t) = \dot{m} \, c_f \, (\vartheta_\mathrm{outlet}(t) - \vartheta_\mathrm{amb}) \,\,\,\, \forall t\in [0,\mathcal{T}]
\end{align}
The power production over time for both vascular layouts is depicted in \textbf{Fig.~\ref{Fig8:GROM_Power_production}}A and B, respectively. These plots exhibit a monotonic and smoother trend than the previous outlet temperature plot. In both layouts, the power output increases with a higher mass flow rate. However, the U-shaped layout reaches the breakdown phase earlier ($\approx$ 10 years period). However, the comb-shaped layout does not encounter breakdown during the whole 63-year time period.

Finally, the average power production of the geothermal system is calculated using the following expression:
\begin{align}
    \label{GROM_eqn_average_power}
    \mathrm{Average \, power} = \frac{1}{\mathcal{T}}\int_{0}^{\mathcal{T}} \, \mathrm{Power}(t)\, dt
\end{align}
\textbf{Figure~\ref{Fig8:GROM_Power_production}}C illustrates the average power production over 63 years lifetime of both the layouts, showcasing a wide range of mass flow rate variations, extending up to 60 kg/s. Below are some key observations from the bar plot:
\begin{enumerate}[(1)]
    \item For lower mass flow rates of 5 kg/s and 10 kg/s, the average power of the U-shaped layout exceeds that of the comb-shaped layout. Nonetheless, the amount of average power of both layouts is not significant for practical field applications.
    \item Notably, beyond 25 kg/s, the relative increment in average power production for the U-shaped layout is negligible, while that of the comb-shaped layout is increasing exponentially.
    \item Given the intricate branching and higher depth of the comb-shaped layout, the fluid path delves further into the hot subsurface, resulting in significantly higher average power production than the U-shaped layout.
    \item It is noteworthy that the U-shaped layout shows admirable performance up to 20--25 kg/s, given its simple shape and ease of field application. However, the comb-shaped layout excels at higher mass flow rates, outperforming the U-shaped layout.
\end{enumerate}

\begin{figure}[h]
    \centering
    \includegraphics[scale=0.4]{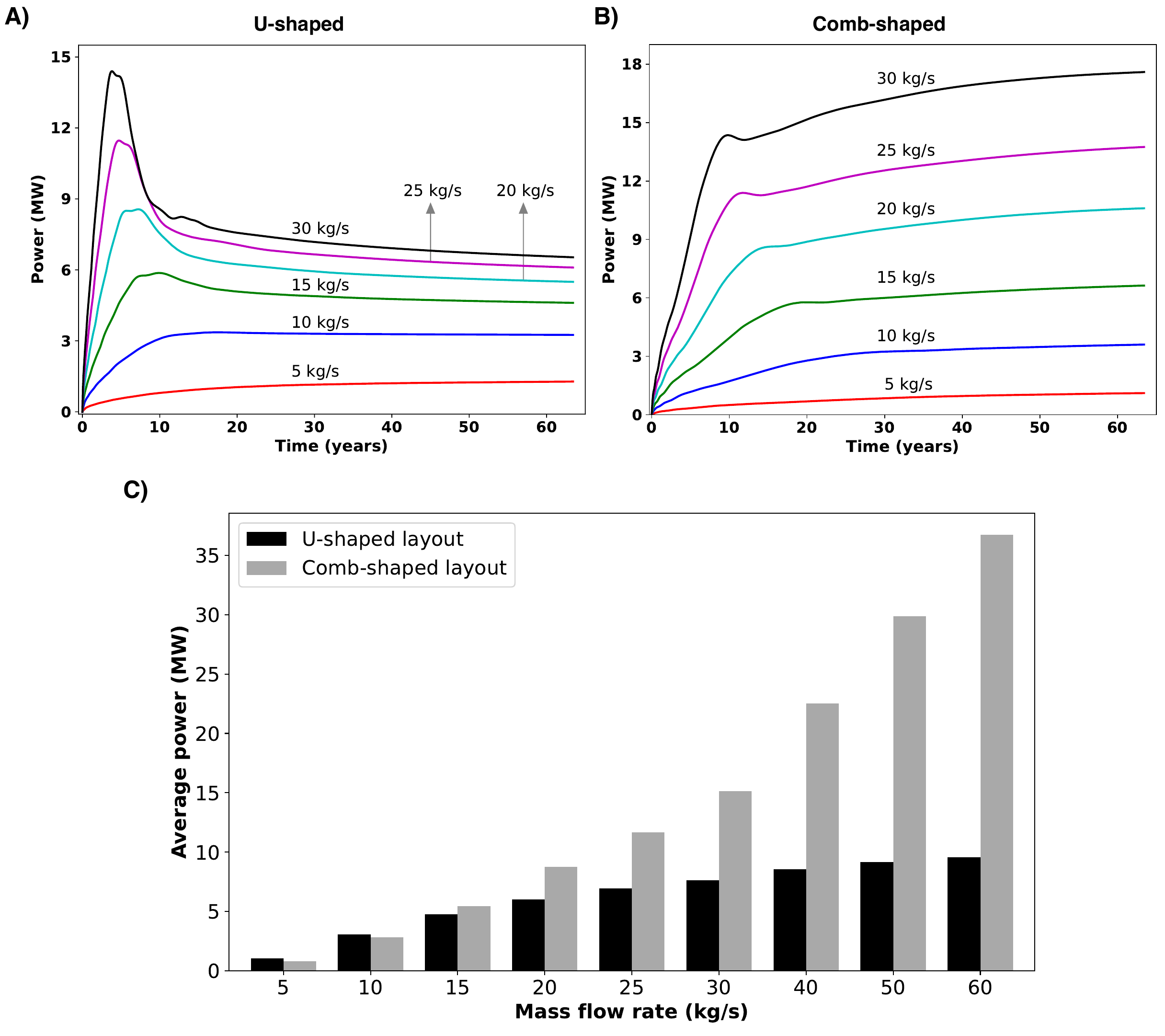}
    \caption{\textsf{Power production.} The figure illustrates the evolution of power production over a span of 63 years for various mass flow rates: \textbf{A)} U-shaped and \textbf{B)} comb-shaped layouts. For both vascular layouts, power production increases with mass flow rate. However, the U-shaped layout experiences a significant drop in power production around the 10-year mark due to a breakdown, reducing its output. \textbf{C)} The average power bar chart offers a clear representation of the geothermal system's efficiency. The U-shaped layout demonstrates better performance at lower mass flow rates. However, as the mass flow rate increases, the comb-shaped layout becomes more efficient, generating approximately four times the average power of the U-shaped layout at 60 kg/s. \emph{Inference:} For the U-shaped layout, a lower mass flow rate (no greater than 20--25 kg/s) is preferable, while a higher mass flow rate is recommended for the comb-shaped layout.
    \label{Fig8:GROM_Power_production}}
\end{figure}

\section{MEAN TEMPERATURE ON THE TOP SURFACE}
\label{Sec:S7_GROM_Surface_MST}

\textbf{Figure \ref{Fig9:GROM_MST_top_surface}}  illustrates the development of surface temperature over time, starting from the inception of a U-shaped geothermal system. Various sizes for the region near the outlet, ranging from 100 m to 450 m, were considered, as shown in Fig.~\ref{Fig9:GROM_MST_top_surface}A. Using numerical results with a high flow rate of 60 kg/s, we calculated the mean surface temperature within each region individually. Figure~\ref{Fig9:GROM_MST_top_surface}B demonstrates that the increase in mean surface temperature is negligible, amounting to less than one degree for regions larger than 200 m, further proving that geothermal energy entails minimal adverse effects. Lastly, this study addressed the fourth scientific question (Q4) outlined in the introduction.

\begin{figure}[h]
    \centering
    \includegraphics[scale=0.42]{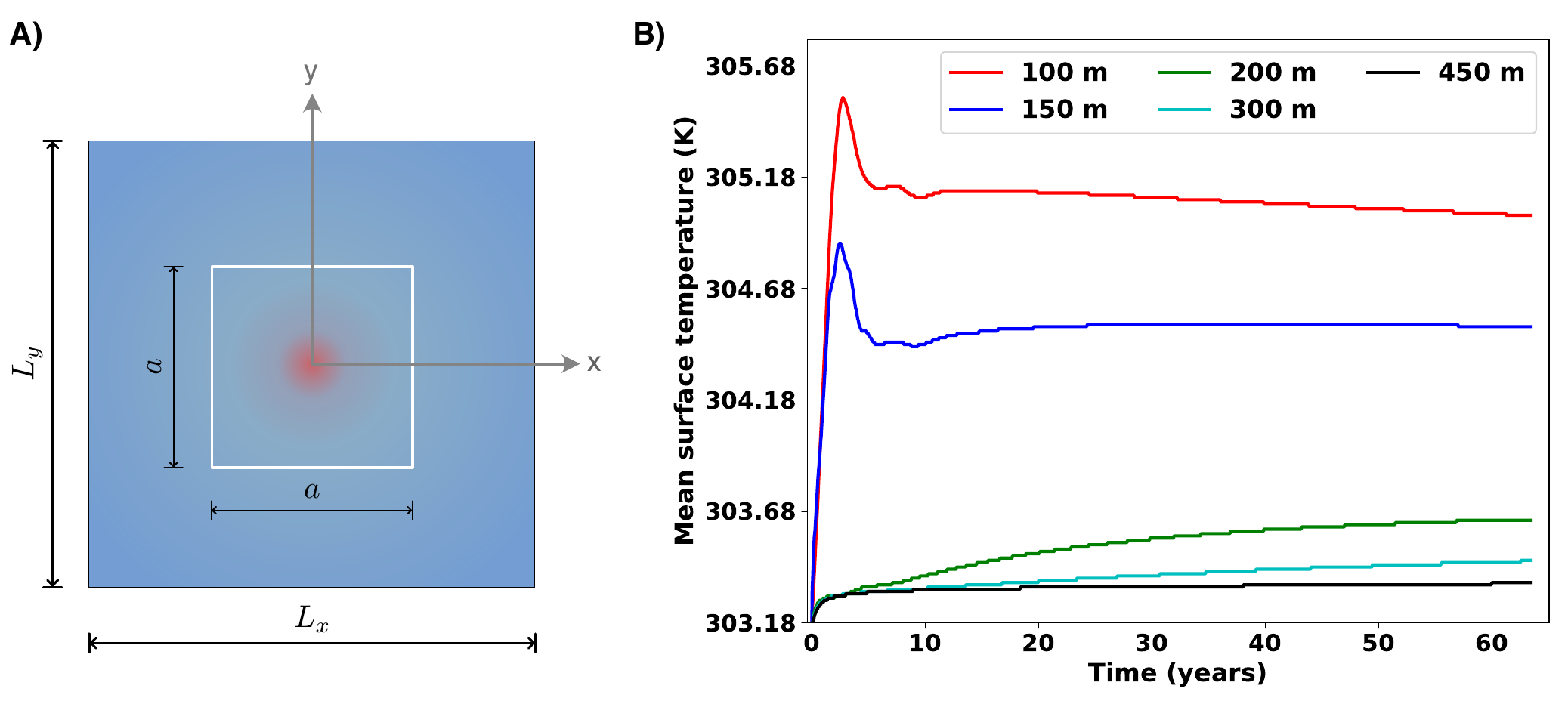} 
    \caption{\textsf{Effect on mean surface temperature.} The figure answers the question: How does a working geothermal system affect the mean surface temperature of regions near the extraction well (i.e., the outlet)?  
    \textbf{A)} We considered various regions near the outlet, whose dimensions are given by the edge length $(a)$. We selected five sizes varying from 100 m to 450 m. \textbf{B)} The mean surface temperature is calculated for each region and is plotted over the time since the inception of the production. The mass flow rate is taken as 60 kg/s. \emph{Inference:} For the square region with $a = 100$ m, a temperature increase of about two degrees K is noted. For larger regions (i.e., $a \geq 200$ m), the rise in mean surface temperature is marginal (less than 1 K). \label{Fig9:GROM_MST_top_surface}}
\end{figure}

\section{DISCUSSION AND CLOSURE}
\label{Sec:S8_GROM_Discussion_Closure}
We have presented a novel modeling framework to study the transient response of closed-loop geothermal systems---also referred to as advanced geothermal systems (AGS). The framework comprises a reduced-order mathematical model, which treats the embedded vasculature as a curve, and an associated numerical formulation based on the finite element method. The significance of our research lies in providing a reliable tool for accurately forecasting the subsurface temperature field and the outlet temperature, which can be utilized to calculate long-term power production rates and efficiency. 

In the realm of geothermal energy, various modeling approaches have been explored for different systems, including EGS and AGS. However, a pivotal question remains: \emph{Whether the model is predictive?} Posed differently: Are the model's results reliable? Our ROM model for geothermal applications has been built based on the mathematical model developed for thermal regulation in microvascular composites. For that application, which assumed the domain to be slender, \citet{nakshatrala2023ROM} has provided the mathematical underpinning, presenting several qualitative properties that establish the model's reliability and trustworthiness. Also, in active cooling of microvascular composites, the mathematical model has been validated experimentally \citep{2023_Nakshatrala_PNAS_Nexus}. This combination of theoretical robustness and practical verification underscores the reliability of our ROM in the study of geothermal systems. Therefore, our proposed model can be considered predictive. 

The mathematical model and numerical results presented in the paper provide a deeper understanding of AGS and aid in perfecting such energy systems, contributing to the development of sustainable green energy solutions on a large scale. Some of the \emph{salient features} of the proposed modeling framework are: 
\begin{enumerate}[(S1)]
\item The ROM framework provides a quick and accurate prediction of the temperature field, including the outlet temperature, which is proportional to thermal efficiency. 
\item The model accurately incorporates the jump conditions across the vasculature. 
\item The model enjoys several attractive mathematical properties, including the uniqueness of solutions. 
\item The framework enables users to determine the system’s performance and optimal capacity.
\end{enumerate}

The \emph{main findings} of this work are: 
\begin{enumerate}[(C1)]
    \item For a given spacing between the inlet and outlet branches, the lateral dimensions of the computational domain can be taken as twice this distance.
    \item The higher mass flow rates result in higher outlet temperature and, subsequently, higher power production.
    \item The time it takes for the system to reach a peak depends on the mass flow rate.
    \item The efficiency of geothermal systems in terms of average power is calculated. The U-shaped layout shows better performance at lower mass flow rates, while the comb-shaped layout excels significantly at higher mass flow rates.
    \item The study also confirms that geothermal systems do not adversely impact neighboring lands, as demonstrated through an analysis of the mean surface temperature (MST) of the top surface.
\end{enumerate}

In a nutshell, our findings indicate that lower mass flow rates are more effective with a U-shaped layout, whereas a comb-shaped layout is optimal with higher mass flow rates. However, considering the depth and layout design, the comb-shaped layout demands substantially higher resources than the U-shaped layout. By carefully evaluating these factors, stake holders can make informed decisions to optimize the design and operation of geothermal systems, ultimately achieving maximum effectiveness and efficiency.

Finally, we envision two fronts for potential future work. The first one is to extend the model by incorporating two-phase characteristics of the flowing fluids. This will be particularly important when supercritical carbon dioxide is used. The second one is to develop a machine-learning-based model that can offer fast forecasting, which is ideal for real-time monitoring of a geothermal system. 

\section*{ACKNOWLEDGMENTS}
KA and KBN acknowledge the support from the UH-Chevron Energy Fellowship, while MKM thanks the U.S. DOE Geothermal Technologies Office (Award \#: DE-EE-3.1.8.1). The opinions expressed in this paper are those of the authors and do not necessarily reflect those of the sponsors.

\section*{DATA AVAILABILITY}
The data supporting this study's findings are available from the corresponding author upon reasonable request.

\bibliographystyle{plainnat}
\bibliography{Master_References}
\end{document}